\newcommand{\blind}{0}
\newcommand{\stkout}[1]{\ifmmode\text{\sout{\ensuremath{#1}}}\else\sout{#1}\fi}
\newcommand{\B}[1]{\mathbf{#1}}
\newcommand{\M}{\mathfrak{m}}
\newcommand{\MS}{\cal M}
\newcommand{\BB}[1]{\boldsymbol{#1}}
\newcommand{\By}{\boldsymbol{\mathsf{y}}}
\newtheorem{theorem}{Theorem}
\newtheorem{proposition}{Proposition}
\newtheorem{constraint}{Constraint}
\definecolor{Florian}{RGB}{255,0,0}
\definecolor{Geir}{RGB}{0,0,255}
\definecolor{AH}{RGB}{255,0,255}
\begin{document}

\def\spacingset#1{\renewcommand{\baselinestretch}%
{#1}\small\normalsize} \spacingset{1}

\if0\blind
{
  \title{\bf Deep Bayesian regression models}
 \author{Aliaksandr Hubin \thanks{
    The authors gratefully acknowledge \textit{CELS project at the University of Oslo} for giving us the opportunity, inspiration and motivation to write this article.}\hspace{.2cm}\\
    Geir Storvik \\
    Department of Mathematics, University of Oslo\\
     and \\
    Florian Frommlet \\
    CEMSIIS, Medical University of Vienna}
  \maketitle
} \fi

\if1\blind
{
  \bigskip
  \bigskip
  \bigskip
  \begin{center}
    {\LARGE\bf Deep non-linear regression models in a Bayesian framework}
\end{center}
  \medskip
} \fi

\bigskip
\begin{abstract}
\normalsize
Regression models are used for inference and prediction in a wide range of applications providing a powerful scientific tool for researchers and analysts from different fields. In many research fields the amount of available data as well as the number of potential explanatory variables is rapidly increasing. Variable selection and model averaging have become extremely important tools for improving inference and prediction. However, often  linear models are not sufficient and the complex relationship between input variables and a response is better described by introducing non-linearities and complex functional interactions. Deep learning models have been extremely successful in terms of prediction although they are often difficult to specify and potentially suffer from overfitting. The aim of this paper is to bring the ideas of deep learning into a statistical framework which yields more parsimonious models and allows to quantify model uncertainty. To this end we introduce the class of deep Bayesian regression models (DBRM) consisting of a generalized linear model combined with a comprehensive non-linear feature space, where non-linear features are generated just like in deep learning but combined with variable selection in order to include only important features. DBRM can easily be extended to include latent Gaussian variables to model complex correlation structures between observations, which seems to be not easily possible with existing deep learning approaches. Two different  algorithms based on MCMC are introduced to fit DBRM and to perform Bayesian inference. The predictive performance of these algorithms is compared with a large number of state of the art algorithms. Furthermore we illustrate how DBRM can be used for model inference in various applications. 

\end{abstract}

\noindent%
{\it Keywords:}   Bayesian deep learning; Deep feature engineering and selection; Combinatorial optimization; Uncertainty in deep learning; Bayesian model averaging; Semi-parametric statistics; Automatic neural network configuration; Genetic algorithm;  Markov chains Monte Carlo.
%\vfill

%\newpage
%\spacingset{1.45} % DON'T change the spacing!
\spacingset{1.2} 

\section{Introduction}\label{section1}

Regression models are an indispensible tool for answering scientific questions in almost all research areas. Traditionally scientists have been trained to be extremely careful in specifying adequate models and to include not too many explanatory variables. The orthodox statistical approach warns against blindly collecting data of too many variables and relying on automatic procedures to detect the important ones~\citep[see for example][]{BA_2002}. Instead, expert based knowledge of the field should guide the model building process such that only a moderate number of models are compared to answer specific research questions.
 
In contrast, modern technologies have lead to the entirely different paradigm of machine learning where routinely  extremely large sets of input explanatory variables - the so called features - are considered. Recently deep learning procedures have become quite popular and highly successful in a variety of real world applications \citep{Goodfellow-et-al-2016}. These algorithms apply iteratively some nonlinear transformations aiming at optimal prediction of response variables from the outer layer features. Each transformation yields another hidden layer of features which are also called neurons. The architecture of a deep neural network then includes the specification of the nonlinear intra-layer transformations (\emph{activation functions}), the number of  layers (\emph{depth}), the number of features at each  layer (\emph{width}) and the connections between the neurons (\emph{weights}). The resulting model is trained by means of some optimization procedure (e.g. stochastic gradient search) with respect to its parameters in order to fit a particular objective (like minimization of RMSE, or maximization of the likelihood, etc.).

Surprisingly it is often the case that such procedures easily outperform traditional statistical models, even when these were carefully designed and reflect expert knowledge~\citep{refenes1994stock, razi2005comparative, adya1998ective, sargent2001comparison, kanter2015deep}. Apparently the main reason for this is that the features from the outer layer of the deep neural networks become highly predictive after being processed through the numerous optimized nonlinear transformations. Specific regularization techniques (dropout, $L_1$ and $L_2$ penalties on the weights, etc.) have been developed for deep learning procedures to avoid overfitting of training data sets, however success of the latter is not obvious. Normally one has to use huge data sets to be able to produce generalizable neural networks. 

The universal approximation theorems~\citep{Cybenko1989,hurnik1991} prove that all neural networks with sigmoidal activation functions~\citep[generalized to the class of monotonous bounded functions in ][]{hurnik1991} with at least one hidden layer can approximate any function of interest defined on a closed domain in the Euclidian space.
Successful applications typically involve huge datasets where even nonparametric methods can be efficiently applied. One drawback of deep learning procedures is that, due to their complex nature, such models and their resulting parameters are difficult to interpret. Depending on the context this can be a more or less severe limitation. These models are densely approximating the function of interest and transparency is not a goal in the traditional applications of deep learning. However, in many research areas it might be desirable to obtain interpretable (nonlinear) regression models rather than just some dense representation of them. Another problem is that fitting deep neural networks is very challenging due to the huge number of parameters involved and the non-concavity of the likelihood function. As a consequence optimization procedures often yield only local optima as parameter estimates.

This paper introduces a novel approach which combines the key ideas of deep neural networks with Bayesian regression resulting in a flexible and broad framework that we call deep Bayesian regression. This framework also includes many other popular statistical learning approaches.  Compared to deep neural networks we do not have to prespecify the architecture but our deep regression model can adaptively learn the number of layers, the number of features within each layer and the activation functions. In a Bayesian model based approach potential overfitting is avoided through appropriate priors which strongly penalize engineered features from deeper layers. Furthermore deep Bayesian regression allows to incorporate correlation structures via latent Gaussian variables, which seems to be rather difficult to achieve within traditional deep neural networks.

Fitting of the deep Bayesian regression model is based on a Markov chain Monte Carlo (MCMC) algorithm for Bayesian model selection which is embedded in a genetic algorithm for feature engineering. A similar algorithm was previously introduced in the context of logic regression~\citep{Hubin2017}. We further develop a reversible version of this genetic algorithm to obtain a proper Metropolis-Hastings algorithm.  
We will demonstrate that automatic feature engineering within regression models combined with Bayesian variable selection and model averaging can improve predictive abilities of statistical models whilst keeping them reasonably simple, interpretable and transparent.  The predictive ability of deep Bayesian regression is compared  with deep neural networks, CARTs, elastic networks, random forests, and other statistical learning techniques under various scenarios. Furthermore we illustrate the potential of our approach to find meaningful non-linear models and infer on parameters of interest. As an example we will retrieve several ground physical laws from raw data.

The rest of the paper is organized as follows. The class of deep Bayesian regression models (DBRM) is  mathematically defined in Section~\ref{section2}. In Section~\ref{section3} we describe two algorithms for fitting DBRM, namely the genetically modified MJMCMC (GMJMCMC) and its reversible version (RGMJMCMC). In Section~\ref{section4} these algorithms are applied to several real data sets. The first examples are aiming at prediction where the performance of our approach is compared with various competing statistical learning algorithms. Later examples have the specific goal of retrieving an interpretable model. In the final Section \ref{section5} some conclusions and suggestions for further research are given. Additional examples and details about the implementation can be found in the Appendix.

\section{{DBRM: The deep Bayesian regression model}}\label{section2}
We model the relationship between $m$ features and a response variable based on $n$ samples from a training data set. Let $Y_{i}$ denote the response data and $\boldsymbol{x}_{i} = (x_{i1}, \dots, x_{im})$  the $m$-dimensional vector of input covariates for $i  = 1,...,n$. The proposed model is within the framework of a generalized linear model,
but extended to include a flexible class of non-linear transformations (features) of the covariates, to be further described in Section~\ref{sec:feature}. This class includes a finite (though huge) number $q$ of possible features, which can in principle be enumerated as $F_j(\boldsymbol{x}_i)$, $j = 1,...,q$. With this notation we define the deep Bayesian regression model (DBRM), including (potentially) up to $q$ features: 
\begin{subequations}\label{themodeleq}
\begin{align} 
  Y_i|\mu_i \sim & \ \mathfrak{f}(y|\mu_i;\phi)\; ,&& i  = 1,...,n\\
   \mathsf{h}(\mu_i) = &\ \beta_0  + \sum_{j=1}^{q} \gamma_{j}\beta_{j}F_{j}(\boldsymbol{x}_{i})\; .
   \label{DeepModel}
\end{align}
\end{subequations}
Here $\mathfrak{f}(\cdot|\mu,\phi)$ is the density (mass) of a probability distribution from the exponential family with expectation $\mu$ and dispersion parameter $\phi$, while $\mathsf{h}(\cdot)$ is a link function relating the mean to the underlying covariates \citep{McCullagh-Nelder-1989}.  The features can enter through an additive structure with coefficients $\beta_j \in \mathbb{R}, j = 1,...,q$.  Equation (\ref{DeepModel}) includes all possible $q$ components (features), using binary variables $\gamma_j$ indicating whether the corresponding variables are to be actually included into the model or not.
Priors for the different parameters of the model are specified in Section \ref{sec:priors}.

\subsection{Topology of the feature space}\label{sec:feature}
The feature space is constructed through  a hierarchical  procedure similar to the deep learning approach~\citep{lecun2015deep, Goodfellow-et-al-2016}, but allowing for automatic construction of the architecture. This is in contrast to deep neural networks where the architecture has to be set in advance~\citep{lecun2015deep}. We can also obtain posterior distributions of different architectures within the suggested approach.

Deep neural networks typically use one or several pre-specified activation functions to compute hidden layer values, where currently the rectified linear unit $\text{ReLU}(x) = \max\{0, x\}$ is the most popular. The configuration of where (at which layers and for which subsets of neurons) these functions are applied is fixed. In contrast, in our approach the activation function $g$ can be dynamically selected from a pre-specified \emph{set} $\cal{G}$ of non-linear functions, which can include, apart from the rectified linear unit, several other transformations, possibly adjustable to the particular application. Examples include $\text{exp}(x)$, $\text{log}(x)$, $\text{tanh}(x)$, $\text{atan}(x)$ and $\text{sigmoid}(x)$.

The   construction of possible features will, similarly to deep neural networks, be performed recursively through non-linear combinations of previously defined features. Let $\mathcal{G}$ denote a set of $l$ non-linear functions,  $\mathcal{G} = \{g_1(x),...,g_l(x)\}$.
Define the \emph{depth} of a feature as the maximal number of nonlinear functions from $\mathcal{G}$ applied recursively when generating that feature. For example, a feature $F(x) = \sin\left(\cos\left(\log(x)\right)+\exp(x)\right)$ has depth equal to 3. Denote the set of features of depth $d$ by ${\cal F}_d$, which will be of size $q_d$. Furthermore denote the vector of all features of depth less or equal to $d$ by $\boldsymbol{F}^d(\boldsymbol{x})$.  The inner layer of features of depth zero consists of the original covariates themselves, ${\cal F}_0 := \{x_1, \dots, x_m\}$, where we drop the index $i$ for notational convenience.  Then $q_0 = m$ and $\boldsymbol{F}^0(\boldsymbol{x}) = \boldsymbol{x} = (x_1, \dots, x_m)$.  A new feature $F \in {\cal F}_{d+1}$ is obtained by applying a non-linear transformation $g$ on an affine transformation of  $\boldsymbol{F}^d(\boldsymbol{x})$:
\begin{equation} \label{eq:new_features}
F(\boldsymbol{x}) = g(\alpha_0+\boldsymbol{\alpha}^T \boldsymbol{F}^d(\boldsymbol{x})) \;.
\end{equation}
Equation~\eqref{eq:new_features} has the functional form most commonly used in deep neural networks models \citep{Goodfellow-et-al-2016}, though we allow for linear combinations of features from layers of \emph{different} depths as well as combinations of \emph{different} non-linear transformations.
The affine transformation in~\eqref{eq:new_features} is parameterized by the intercept $\alpha_0 \in \mathbb{R}$ and the coefficient vector of the linear combination $\boldsymbol{\alpha}$ which is typically very sparse but must include at least one non-zero coefficient  corresponding to a feature from ${\cal F}_{d}$. 
Different features of  ${\cal F}_{d+1}$  are distinguished only according to the \emph{hierarchical pattern} of non-zero entries of $\boldsymbol{\alpha}$.
This is similar to the common notion in variable selection problems where models including the same variables but different non-zero coefficients are still considered to represent the same model. In that sense our features are characterized by the model topology and not by the exact values of the coefficients $(\alpha_0,\boldsymbol{\alpha})$.

The number of features $q_d$ with depth of size $d$ can be calculated recursively with respect to the number of features from the previous layer, namely 
\begin{equation}\label{eq:num_features}
q_d=|\mathcal{G}|\left(2^{{\sum_{t=0}^{d-1}q_t}}-1\right)-\sum_{t=1}^{d-1}q_t,
\end{equation}
where as discussed above $q_0=m$ and $|\mathcal{G}|$ denotes the number of different functions included in $\mathcal{G}$. One can clearly see that
the number of features grows exponentially with depth.
In order to avoid potential overfitting through too complex models the two constraints are defined.
\begin{constraint}
 The depth of any feature involved is less or equal to $D_{max}$.
\end{constraint}
\begin{constraint}
The total number of features in a model is less or equal to $Q$.
\end{constraint}
\noindent
The first constraint ensures that the feature space is finite, with total size 
 $q = \sum_{d = 0}^{D_{max}} q_d$, while the second constraint limits the number of possible models by  $\sum_{k=1}^{Q}{q \choose k}$.
 The (generalized) universal approximation theorem~\citep{hurnik1991} is applicable to the defined class of models provided that $\cal{G}$ contains at least one bounded monotonously increasing function. Hence the defined class of models is dense in terms of approximating any function of interest in the closed domain of the Euclidean  space.

The feature space we have iteratively constructed through equation \eqref{eq:new_features} 
is extremely rich and encompasses as particular cases features from numerous other popular statistical and machine learning models. If the set of non-linear functions only consists of one specific transformation, for example $\mathcal{G} = \{\sigma(x)\}$
where $\sigma(\cdot)$ is the sigmoid function, then the corresponding feature space includes all possible neural networks with the sigmoid activation function. %It is still slightly larger because linear combinations of features of different depths are additionally possible. 
Another important class of models included in the DBRM framework are decision trees \citep{breiman1984classification}. Simple decision rules correspond to the non-linear function $g(x)=\text{I}(x\ge 1)$. 
Intervals and higher dimensional regions can be defined through multiplications of such terms. Multivariate adaptive regression splines \citep{friedman1991multivariate} are included by allowing a pair of piecewise linear functions $g(x) = \max\{0,x-t\}$ and $g(x) = \max\{0,t-x\}$. Fractional polynomials~\citep{royston1997approximating} can also be easily included through $g(x)=x^{r/s}$.
Logic regression, characterized by features being logic combinations of binary covariates~\citep{ruczinski2003logic,Hubin2017} 
is also fully covered by DBRM models. 
Combining more than one function in $\mathcal{G}$ provides quite interesting additional flexibilities in construction of features, e.g. $\left(0.5x_1+x_2^{0.5}+\text{I}(x_2>1)+\sigma(x_3)\right)^2$.

Interactions between (or multiplication of) variables are important features to consider. 
Assuming  that both $\log(x)$ and $\exp(x)$ are members of  $\mathcal{G}$,  multiplication of two (positive) features becomes a new feature with depth $d = 2$ via 
\begin{equation}
F_k(\boldsymbol{x}) * F_l(\boldsymbol{x}) = \exp\left(\log(F_k(\boldsymbol{x})) +\log(F_l(\boldsymbol{x}))\right)\;.\label{eq:mult}
\end{equation}
However, due to its importance  we will include the multiplication operator between two features directly in our algorithmic approach (see Section~\ref{section3}) and treat it simply as an additional transformation with depth 1. 

\subsection{Feature engineering} \label{Subsec:FeatureEngineering}

In principle one could generate any feature defined sequentially through~\eqref{eq:new_features} but in order to construct a computationally feasible algorithm for inference, restrictions on the choices of  $\boldsymbol\alpha$'s  are made. 
Our \textit{feature engineering} procedure computes specific values of $\boldsymbol\alpha$ for any engineered feature in the topology using the following three operators which take features of depth $d$ as input and create new features of depth $d+1$.
\[
F_j(\boldsymbol{x})=
\begin{cases}
g(F_k(\boldsymbol{x})) &\text{for a \textit{modification} of}\ F_k \in {\cal F}_d;\\
F_k(\boldsymbol{x})*F_l(\boldsymbol{x})  &\text{for a \textit{crossover} between}\ F_k \in {\cal F}_d, F_l \in {\cal F}_s (s \leq d); \\
g(\boldsymbol{\alpha}_j^T\B F^d(\boldsymbol{x}) + \alpha_{j,0}), &\text{for a nonlinear \textit{projection}}.
\end{cases}
\]

The \textit{modification} operator is the special case of~\eqref{eq:new_features} where $\boldsymbol \alpha_j$ has only one nonzero element $\alpha_{j,k}=1$. The \textit{crossover} operator can also be seen as a special case in the sense of~\eqref{eq:mult}. Only for the general \textit{projection} operator one has to estimate $\boldsymbol{\alpha}_j$, which is usually assumed to be very sparse. We have currently implemented four different strategies to compute $\boldsymbol{\alpha}$ parameters. Here and in Section \ref{section4} we focus on the simplest and computationally most efficient version. 
The other three strategies as well as further ideas to potentially improve the feature engineering step are discussed in Section~\ref{section5} and in Appendix~\ref{ap:add.ex1_3} available in the web supplement. 

Our default procedure to obtain $\boldsymbol{\alpha}_j$ is to compute maximum likelihood estimates for model~\eqref{themodeleq} including only  $F_{r_l}, r_l  = 1,...,w_j$ as covariates, that is for
\begin{equation} \label{Strategy_1}
\mathsf{h}(\mu) = \boldsymbol{\alpha}_j^T\B F^d(\boldsymbol{x})  + \alpha_{j,0} \; .
\end{equation}
 This choice is made not only for computational convenience, but also has some important advantages. 
The non-linear transformation $g$ is not involved when computing   $\boldsymbol{\alpha}_j$. Therefore the procedure can easily be applied for non-linear transformations $g$ which are not differentiable, like for example the extremely popular rectified linear unit of neural networks or the characteristic functions for decision trees. Furthermore ML estimation for generalized linear models usually involves convex optimization problems with unique solutions.  On the other hand this simple approach means that the parameters $\boldsymbol{\alpha}_{r_l}$
from  $F_{r_l}(\boldsymbol{x})$ are not re-estimated but kept fixed, a restriction which will be overcome by some of the alternative strategies (including a fully Bayesian one) introduced in Appendix~\ref{ap:add.ex1_3}.

\subsection{Bayesian model specifications} \label{sec:priors}

In order to put model \eqref{themodeleq} into a Bayesian framework one has to specify priors for all parameters involved. 
The structure of a specific model is uniquely defined by the vector $\M = (\gamma_1,\dots,\gamma_q)$. 
We introduce model priors which penalize for number and the \emph{complexity} of included features in the following way: 
\begin{align}
p(\M)\propto\ &
\prod_{j=1}^q a^{\gamma_jc(F_j(\B x))} \; .
\label{eq:modelprior}
\end{align}
The measure $c(F_j(\B x))\ge 0$ is a non-decreasing function of the complexity of feature $F_j(\B x)$. 
With $0<a<1$, the prior prefers both fewer terms and simpler features over more complex ones.

There are many different ways of defining feature complexity. 
We will consider a measure taking into account both the number of non-linear transformations and the number of features used at each transformation step. 
Define the \emph{local width} of a feature as the number of non-zero coefficients of $\boldsymbol{\alpha}$ (including $\alpha_0$) in equation~\eqref{eq:new_features}. Features obtained with a modification operator or a crossover operator both have local width 1.
However, features may inherit different widths from parental layers. Define accordingly the \emph{total width} of a feature recursively as the sum of all local widths of features contributing  in equation~\eqref{eq:new_features}.
In the current implementation of DBRM this total width serves as  complexity measure as illustrated in the following example. Consider a feature of the form
\[
F_3(\boldsymbol{x}) = g(\alpha_{3,0}+\boldsymbol{\alpha}^T_3(g(\alpha_{1,0}+\boldsymbol{\alpha}^T_{1}\boldsymbol{x}),g(\alpha_{2,0}+\boldsymbol{\alpha}^T_{2}\boldsymbol{x}))
\]
which has a depth of $d = 2$ and a total width of $w = ||\boldsymbol{\alpha}_{1}||_0+||\boldsymbol{\alpha}_{2}||_0+||\boldsymbol{\alpha}_3||_0+3$. Here $||\cdot||_0$ refers to the $l_0$-"norm" 
(the number of non-zero elements in the corresponding vector) and the additional 3 corresponds to the intercept terms.

To complete the Bayesian model one needs to specify priors for $\boldsymbol{\beta}^{\M}$, the vector of regression parameters   for which $\gamma_j = 1$ and, if necessary, for the dispersion parameter $\phi$. 
\begin{align}
\BB\beta^{\M}|\phi\sim&p(\boldsymbol{\beta}^{\M}|\phi) \; ,
\label{eq:prior.par}\\
\phi\sim& p_{\phi}(\phi).
\end{align}
Prior distributions on $\BB\beta$ and $\phi$
are usually defined in a way to facilitate efficient computation of marginal likelihoods  (for example by specifying conjugate priors) and should be carefully chosen for the applications of interest. Specific choices are described in Section \ref{section4} when considering different real data sets.

\subsection{Extensions of DBRM}

Due to the model-based approach of DBRM different kinds of extensions can be considered. One important 
extension is to include latent variables, both to take into account correlation structures and over-dispersion. Simply replace~\eqref{DeepModel}
by
\begin{align}
   \mathsf{h}(\mu_i) = & \beta_0  + \sum_{j=1}^{q} \gamma_{j}\beta_{j}F_{j}(\boldsymbol{x}_{i})+\sum_{k=1}^{r} \lambda_{k}\delta_{ik}\label{DeepModel2}
\text{ where }
\boldsymbol{\delta_k} = (\delta_{1k},...,\delta_{nk}) \sim N_{n}\left(\boldsymbol{0}, \boldsymbol\Sigma_k\right).
\end{align}
In this formulation of the DBRM model, equation~\eqref{DeepModel2} now includes $q + r$ possible components where $\lambda_k$ indicates whether the corresponding latent variable is to be included into the model or not.
The latent Gaussian variables with covariance matrices $\boldsymbol{\Sigma}_k$ allow to describe different correlation structures between individual observations (e.g. autoregressive models). The matrices typically depend only on a few parameters, so that in practice one has $\boldsymbol{\Sigma}_k=\boldsymbol{\Sigma}_k(\boldsymbol\psi_k)$.

The model vector now becomes $\M= (\BB\gamma,\BB\lambda)$ where $\BB\lambda=(\lambda_1,...,\lambda_r)$.
Similar to the restriction on the number of features that can be included in a model, we introduce an upper limit $R$ on the number of latent variables.
The total number of models with non-zero prior probability will then be $\sum_{k=1}^{Q}{q \choose k} \times \sum_{l=1}^{R}{r \choose l}$.
The corresponding prior for the model structures is defined by
\begin{align}
p(\M)\propto\ &
\prod_{j=1}^q a^{\gamma_jc(F_j(\B x))}
\prod_{k=1}^r  b^{\lambda_kv(\delta_k)}.\label{eq:modelprior2}
\end{align}
Here the function  $v(\delta_k)\ge 0$ is a measure for the complexity of the latent variable $\delta_k$,   which is assumed to be a non-decreasing function of the number of hyperparameters defining the distribution of the latent variable. In the current implementation we simply count the number of hyperparameters. 
The prior is further extended to include
\begin{align}
\boldsymbol{\psi}_k\sim&\pi_k(\boldsymbol{\psi}_k), &&\text{for each $k$ with $\lambda_k=1$}.\label{latentprior}
\end{align}

\subsection{Bayesian inference}\label{sec:Bayes.inference}

Posterior marginal probabilities for the model structures are, through 
Bayes formula, given by
\begin{equation}\label{PMP}
p(\M|\B y)
=\frac{p(\M)p(\B y|\M)}
      {\sum_{\M'\in\MS}
      p(\M')p(\B y|\M')}\; ,
\end{equation}
where $p(\By|\M)$ denotes the marginal likelihood of $\By$ given a specific model $\M$. Due to the huge size of $\MS$ it is not possible to calculate the sum in the denominator of~\eqref{PMP} exactly.
In Section~\ref{section3} we will discuss how to obtain estimates of $\hat p(\M|\B y)$ using MCMC algorithms.

The (estimated) posterior distribution of any statistic $\Delta$ of interest (like for example in predictions) becomes 
\begin{equation}\label{posterior_quantile}
\hat p(\Delta|\By) =  \sum_{\M \in\MS}{p(\Delta|\M,\By)\hat p(\M|\By)} \; .
\end{equation}
The corresponding expectation is obtained via model averaging:
\begin{equation}\label{expectation_quantile}
\hat{\text{E}}[\Delta|\By] =  \sum_{\M \in\MS}{\text{E}[\Delta|\M,\By]\hat p(\M|\By)} \;.
\end{equation}
An important example is 
the posterior marginal inclusion probability of a specific feature $F_j(\boldsymbol{x})$, which can be estimated by
\begin{equation}\label{marginal_inclusion}
\hat p({\gamma}_{j}=1|\By) =  \sum_{\M:\gamma_j=1}\hat p(\M|\By).
\end{equation}
This provides a well defined measure of the importance of an individual component.

\section{Fitting of DBRM}\label{section3}
In this section we will develop algorithmic approaches for fitting the DBRM model. The main tasks are to (i) calculate the marginal likelihoods
$p(\B y|\M)$ for a given model and (ii) to search through the model space $\MS$. Concerning the first issue one has to solve the integral 
\begin{align} \label{MarginalPosterior}
p(\B y|\M)
= & \int_{\BB\theta}p(\B y|,\BB\theta_{\M},\M)p(\BB\theta_{\M}|\M)d\BB\theta_{\M}
\end{align}
where $\BB\theta_{\M}$ are all the parameters involved in model $\M$.
In general these marginal likelihoods are difficult to calculate.
%~\citep{Friel2012,HubinStorvikINLA}. 
Depending on the model specification we can either use exact calculations when these are available~\citep{Clyde:Ghosh:Littman:2010} or numerical approximations based on simple Laplace approximations~\citep{tierney1986accurate}, the popular integrated nested Laplace approximation (INLA) \citep{rue2009eINLA} or MCMC based
methods like Chib's or Chib and Jeliazkov's method \citep{chib1995marginal,chib2001marginal}. Some comparison of these methods are presented in~\citet{Friel2012} and~\citet{HubinStorvikINLA}. 

The parameters $\boldsymbol\theta_{\M}$ of DBRM for a specified model topology  consist of $\boldsymbol\beta_{\M}$, the regression coefficients for the features, and $\phi$, the dispersion parameter. If latent Gaussian variables are included into the models,  parameters $\boldsymbol\psi_{\M}$ will also be part of $\boldsymbol\theta_{\M}$.
We are not including here the set of coefficients $\boldsymbol\alpha_{\M}$ which encompasses all the parameters inside the features $F_j(\boldsymbol{x})$ of model $\M$. These are considered simply as constants, used to iteratively generate features of depth $d$ as described in Section \ref{Subsec:FeatureEngineering}.  One \emph{can} make the model more general and consider  $\boldsymbol\alpha_{\M}$ as part of the parameter vector $\boldsymbol\theta_{\M}$. However, solving the integral \eqref{MarginalPosterior} over the full set of parameters $\boldsymbol\theta_{\M}$ including $\boldsymbol\alpha_{\M}$ will become computationally extremely demanding due to the complex non-linearity and the  high-dimensional integrals involved. Some possibilities how to tackle this problem in the future are portended in  Section \ref{section5}.
%\todo{FF: I have moved the relevant paragraph to the Discussion. It still needs some rewriting.}

%\subsection{Search algorithms} \label{Sec:GMJMCMC}

Condsider now task (ii), namely the development of algorithms for searching through the model space. 
Calculation of  $p(\M|\By)$ requires to iterate through the space $\MS$ including all potential models, which due to the combinatorial explosion \eqref{eq:num_features} becomes computationally infeasible for  even a moderate set of input variables and latent Gaussian variables. We therefore aim at approximating $p(\M|\By)$ by means of finding a subspace $\MS^* \subset \MS$ which can be used to approximate~\eqref{PMP} by 
\begin{equation}
\widehat{p}(\M|\By) =  
\frac{p(\M)p(\By|\M)}
       {\sum_{\M' \in \MS^*}{p(\M')p(\By|\M')}}\,\text{I}(\M\in\MS^*) \; .
 \label{approxpost}
\end{equation}
Low values of $p(\M)p(\By|\M)$ induce both  low values of the numerator and small contributions to the denominator in~\eqref{PMP}, hence models with low mass $p(\M)p(\By|\M)$ will have no significant influence on posterior marginal probabilities for other models. On the other hand, models with high values of $p(\M)p(\By|\M)$ are important to be addressed. It might be equally important to include \emph{regions} of the model space where no single model has particularly large mass but there are many  models giving a moderate contribution.  Such regions of high posterior mass are particularly important for constructing a reasonable subspace $\MS^* \subset  \MS$ and missing them can dramatically influence our posterior estimates. 

The mode jumping MCMC algorithm  (MJMCMC) was introduced in~\citet{Hubin2016}
for variable selection within standard GLMM models, that is models where all possible features are pre-specified.
The main ingredient in MJMCMC is the specification of (possibly large) moves in the model space. This algorithm was generalized to the genetically modified MJMCMC algorithm (GMJMCMC) in the context of logic regression by~\citet{Hubin2017}. The GMJMCMC is not a proper MCMC algorithm in the sense of converging to the posterior distribution $p(\M|\BB y)$ although it does provide consistent model estimates by means of the approximation~\eqref{approxpost}. 
In the following two subsections we are suggesting an adaptation of the GMJMCMC algorithm to DBRM models. Additionally, we derive a fully reversible GMJMCMC algorithm (RGMJMCMC). Since both algorithms rely on the MJMCMC algorithm we start with a short review of this algorithm. Throughout this section without loss of generality we will only consider features and not latent variables. Selection of latent variables is part of the implemented algorithms but only complicates the presentation.

\subsection{The mode jumping MCMC algorithm}

Consider the case of a fixed predefined set of
$q$ potential features with no latent variables. 
Then the general model space $\MS$ is of size $2^q$ and standard MCMC algorithms
tend to get stuck in local maxima. The mode jumping MCMC procedure (MJMCMC) was originally proposed by~\citet{Tjelmeland99modejumping} for continuous space problems and recently extended to model selection settings by~\citet{Hubin2016}. MJMCMC is a proper MCMC algorithm equipped with the possibility to jump between different modes within the discrete model space. 
The algorithm is described in Algorithm~\ref{MJMCMCalg0}.
\begin{algorithm}[t]
\caption{\label{MJMCMCalg0}MJMCMC}
\begin{algorithmic}[1]
\State  Generate a large jump  $\M_0^*$ according to a  proposal distribution  $q_l(\M_{(0)}^*|\M)$.
\item Perform a local  optimization, defined through $\M_{(k)}^*\sim q_o(\M_{(k)}^*|\M_{(0)}^*)$.
\State Perform a small randomization  to generate the proposal $\M^*\sim q_r(\M^*|\M_{(k)}^*)$.
\item  Generate backwards auxiliary variables $\M_{(0)}\sim q_l(\M_{(0)}|\M^*)$,  $\M_{(k)}\sim q_o(\M_{(k)}|\M_{(0)})$.
\State Put
\[\M'=
\begin{cases}\M^*&\text{with probability $r_{mh}(\M,\M^*;\M_{(k)},\M_{(k)}^*)$;}\\
\M&\text{otherwise,}
\end{cases}
\]
where
\begin{equation}
r_{mh}^*(\M,\M^*;\M_{(k)},\M_{(k)}^*) = \min\left\{1,\frac{\pi(\M^*)q_r(\M|\M_{(k)})}{\pi(\M)q_r(\M^*|\M_{(k)}^*)}\right\}\label{locmcmcgen00}.
\end{equation}
\end{algorithmic}
\end{algorithm}
The basic idea is to make a large jump (changing many model components) combined with local optimization within the discrete model space to obtain a proposal model with high posterior probability. Within a Metropolis-Hastings setting a valid acceptance probability is constructed using symmetric backward kernels, which guarantees that the resulting Markov chain is ergodic and has the desired limiting distribution \citep[see][for details]{Hubin2016}.

\subsection{Genetically Modified MJMCMC}

The MJMCMC algorithm  
requires that all the covariates (features) defining the model space are known in advance and are all considered at each iteration of the algorithm. For the DBRM models, the features are of a complex structure and a major problem in this setting is that it is quite difficult to fully specify the space $\MS$ in advance (let alone storing all potential features in some data structure). 
The idea behind GMJMCMC is to apply the MJMCMC algorithm within a smaller set of model components in an iterative setting. 

\subsubsection{Main algorithm}
Throughout our search we generate a sequence of so called \emph{populations} $\mathcal{S}_1,\mathcal{S}_2,...,\mathcal{S}_{T_{max}}$. Each  $\mathcal{S}_t$ is a set of $s$ features and forms a separate \textit{search space} for exploration through MJMCMC iterations.
Populations dynamically evolve allowing GMJMCMC to explore different parts of the total model space.
Algorithm~\ref{gMJMCMCalg} summarizes the procedure, the exact generation of $\mathcal{S}_{t+1}$ given $\mathcal{S}_t$ is described below. 
\begin{algorithm}[t]
\caption{GMJMCMC}\label{gMJMCMCalg}
\begin{algorithmic}[1]
\State Initialize $\mathcal{S}_{0}$
\State Run the MJMCMC algorithm within search space $\mathcal{S}_0$ for $N_{init}$ iterations and use results to initialize $\mathcal{S}_1$.
\For{$t=1,...,T_{max - 1}$} 
\State Run the MJMCMC algorithm within search space $\mathcal{S}_t$ for $N_{expl}$ iterations.\State Generate a new population $\mathcal{S}_{t+1}$
\EndFor
\State Run the MJMCMC algorithm within search space $\mathcal{S}_{T_{max}}$ for $N_{final}$ iterations.
\end{algorithmic}
\end{algorithm}
   
The following result is concerned with consistency of probability estimates of GMJMCMC when the number of iterations increases. The theorem is an adaption of Theorem~1 in~\citet{Hubin2017}:
\begin{theorem}\label{th:GMJMCMC}
Let $\MS^*$ be the set of models visited through the GMJMCMC algorithm.   Define $M_{S_t}$ to be the set of models visited at iteration $t$ within search space $\mathcal{S}_t$. Assume $s\ge Q$ and  
$\{(\mathcal{S}_t,M_{\mathcal{S}_t})\}$ forms an
irreducible Markov chain over the possible states.
Then the model estimates based on~\eqref{approxpost} will converge to the true model  probabilities as the number of iterations $T_{\max}$ converges to $\infty$.
\end{theorem}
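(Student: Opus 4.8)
The plan is to reduce the convergence statement to a known consistency result for approximations of the form~\eqref{approxpost}, by showing that the set $\MS^*$ of models visited by GMJMCMC eventually covers every model with non-zero prior probability, almost surely, as $T_{\max}\to\infty$. The key observation is that~\eqref{approxpost} is a \emph{deterministic} function of the visited set $\MS^*$: as soon as $\MS^*$ contains all models $\M$ with $p(\M)p(\By|\M)>0$ (of which there are finitely many by Constraints~1 and~2), the ratio in~\eqref{approxpost} equals the exact posterior~\eqref{PMP} for every such $\M$, and equals $0$ on the complement, which is precisely the true posterior since those models have prior mass $0$. Hence the entire claim is equivalent to: with probability one, $\MS^*$ is eventually equal to the (finite) support of the posterior.

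First I would set up the state space of the chain $\{(\mathcal{S}_t, M_{\mathcal{S}_t})\}$ carefully: a state records the current population $\mathcal{S}_t$ (a size-$s$ subset of the full finite feature set, which is finite because $D_{\max}$ is fixed) together with the models visited during the MJMCMC sub-run on that population. Since $s\ge Q$, every model $\M$ respecting Constraint~2 is a subset of \emph{some} admissible population, so for each target model $\M$ there is at least one population $\mathcal{S}$ with $\M\subseteq\mathcal{S}$. Next I would invoke the irreducibility assumption: the chain over states can reach, from any starting state, any population $\mathcal{S}$ containing a prescribed model $\M$; and once that population is the active search space, the embedded MJMCMC run is itself an irreducible (proper) MCMC chain on the $2^s$ submodels of $\mathcal{S}$, so within the $N_{\text{expl}}$ (or $N_{\text{final}}$) iterations it visits $\M$ with positive probability. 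Therefore, at each ``epoch'' there is a uniformly positive probability $\delta>0$ of adding any particular still-missing target model to $\MS^*$.

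The core argument is then a Borel--Cantelli / second-hitting-time argument: since there are only finitely many target models and each is hit in any given epoch with probability at least $\delta$ independently of the past (conditionally, using the Markov property and a uniform lower bound on the relevant transition probabilities), the probability that some target model remains unvisited after $T_{\max}$ epochs decays geometrically, hence $\MS^*$ equals the full posterior support for all $T_{\max}$ large enough, almost surely. Combining this with the first paragraph's observation that~\eqref{approxpost} coincides exactly with~\eqref{PMP} once $\MS^*$ contains the support, we conclude $\widehat p(\M|\By)\to p(\M|\By)$ almost surely; this is exactly the structure of the proof of Theorem~1 in~\citet{Hubin2017}, which I would cite for the routine steps.

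The main obstacle is making the ``uniformly positive probability of visiting a given target model in one epoch'' claim fully rigorous: one must argue that, starting from an arbitrary state, the chain on populations can in finitely many steps reach a population containing the target model \emph{without} this reachability being destroyed by the particular (data-dependent) rule used to generate $\mathcal{S}_{t+1}$ from $\mathcal{S}_t$, and then that the inner MJMCMC chain, which has its own mode-jumping proposal and acceptance ratio~\eqref{locmcmcgen00}, assigns positive probability to the target submodel within the fixed finite budget of iterations. Both pieces follow from the stated irreducibility assumption on $\{(\mathcal{S}_t,M_{\mathcal{S}_t})\}$ and from ergodicity of MJMCMC established in~\citet{Hubin2016}, but the bookkeeping of conditioning and the uniform lower bound $\delta$ over the finitely many (state, target) pairs is where care is needed.
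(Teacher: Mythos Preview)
Your proposal is correct and follows the same overall strategy as the paper: reduce the claim to showing that $\MS^*$ eventually contains every model with positive prior mass, then use the irreducibility assumption on $\{(\mathcal{S}_t,M_{\mathcal{S}_t})\}$ together with $s\ge Q$ to guarantee this. The difference is one of economy. The paper dispatches the second step in a single line by noting that an irreducible Markov chain on a \emph{finite} state space is automatically (positive) recurrent, so every state---and hence every admissible model of size at most $s\ge Q$---is visited almost surely. You instead rebuild this conclusion from scratch via a uniform lower bound $\delta>0$ on the per-epoch hitting probability and a Borel--Cantelli argument. Your route is not wrong, and it yields an explicit geometric rate for the probability of missing a target model, but the uniform-$\delta$ bookkeeping you flag as the ``main obstacle'' is precisely what the classical finite-state recurrence result absorbs for free; invoking it directly, as the paper does, makes that obstacle disappear.
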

\begin{proof}
Note that the approximation~\eqref{approxpost} will provide the exact answer if $\MS^*=\MS$. It is therefore enough to show that the algorithm in the limit will have visited all possible models. 
Since the state space of the irreducible Markov chain $\{(\mathcal{S}_t,M_{\mathcal{S}_t})\}$  is finite, it is also recurrent, and there exists a stationary distribution with positive probabilities on every model. Thereby, all states, including all possible models of maximum size $s$, will eventually be visited.
\end{proof}

\paragraph{Remark} All models visited, also those auxiliary  ones which are used by MJMCMC to generate proposals, will be included into $\MS^*$. For these models, also computed marginal likelihoods will be stored,
making the costly likelihood calculations only necessary for models that have not been visited before.

\subsubsection{Initialization}
The algorithm is initialized by first applying some  procedure (for example based on marginal testing) which selects a subset of $q_0 \leq q$ input covariates.
We denote these preselected components 
by $\mathcal{S}_0 = \{\boldsymbol{x}_1,...,\boldsymbol{x}_{q_0}\}$ 
where for notational convenience we have ordered indices according to preselection which does not impose any loss of generality. Note that depending on the initial preselection procedure, $\mathcal{S}_0$ might include a different number of components than all further populations $\mathcal{S}_t$. MJMCMC is then run for a given number of iterations $N_{init}$  on $\mathcal{S}_0$ and the resulting $s_1<s$ input components 
with highest frequency (ratio of models including the component) 
will become the first $s_1$ members 
of population $\mathcal{S}_1$.
The remaining $s-s_1$ members of  $\mathcal{S}_1$ will be newly created features generated by applying the transformations described in Section \ref{populch} on members of $\mathcal{S}_0$.

\subsubsection{Transition between populations}\label{populch}%

Members of the new population $\mathcal{S}_{t+1}$ are generated by applying certain transformations to components of $\mathcal{S}_t$. 
First some components with low frequency from search space $\mathcal{S}_t$  are removed using a $filtration$ operator. The removed components are then replaced, where each replacement  is generated randomly by a \textit{mutation} operator with probability $P_m$, by a \textit{crossover}  operator with probability $P_c$, by a \textit{modification} operator with probability $P_t$ or by a \textit{projection} operator with probability $P_p$, where $P_c+P_m+P_t+P_p = 1$ (adaptive versions of these probabilities are considered in section~\ref{tricks}). The operators to generate potential features of $\mathcal{S}_{t+1}$ are formally defined below,
where the modification, crossover and projection operators have been introduced already in Section \ref{sec:feature}. Combined, these possible transformations fulfill the requirement about irreducibility in Theorem~\ref{th:GMJMCMC}.

\begin{description}
\item[Filtration:] Features  within $\mathcal{S}_t$ with estimated posterior probabilities below a given threshold are deleted  with probability $P_{del}$. The algorithm offers the option that a subset of $\mathcal{S}_0$  is always kept in the population throughout the search. 

\item[Mutation:] A new  feature  $F_{k}$ is randomly selected from $\mathcal{F}_0$. 
\item[Modification:] A new feature $F_k=g(F_j)$ is created where $F_j$ is randomly selected from $\mathcal{S}_t\cup\mathcal{F}_0$.
%according to the probability distribution $P_{p,t}$ 
and $g(\cdot)$ is randomly selected from $\mathcal{G}$.
%is selected according to the probability distribution $P_{g,t}$. 

\item[Crossover:] A new feature $F_{j_1}* F_{j_2}$ is created by randomly
selecting  $F_{j_1}$ and $F_{j_2}$ from $\mathcal{S}_t\cup\mathcal{F}_0$. %according to some probability distributions $P_{p,t}$.

\item[Projection:] A new feature $F_k=g\left(\alpha_0+\boldsymbol{\alpha}^T\B F^*\right)$ is created in three steps. First a (small) subset of
$\mathcal{S}_t$ is selected by sampling without replacement.
Then $(\alpha_0,\boldsymbol{\alpha})$ is specified according to the rules described in Section~\ref{Subsec:FeatureEngineering} and finally  $g$ is randomly selected   from~$\mathcal{G}$.
\end{description}
\paragraph{} 
 For all features generated with any of these operators it holds that if either the newly generated feature is already present in $\mathcal{S}_t$ or it has linear dependence with the currently present features then it is not considered for  $\mathcal{S}_{t+1}$. In that case a different feature is generated as just described.

\subsubsection{Reversible Genetically Modified MJMCMC}

The GMJMCMC algorithm described above is not reversible and hence cannot guarantee that the ergodic distribution of its Markov chain corresponds to the target distribution of interest \citep[see][for more details]{Hubin2017}. An easy modification based on performing both forward and backward swaps between populations can provide a proper MCMC algorithm in the model space of DBRM models.
Consider a  transition $\M \rightarrow \mathcal{S}' \rightarrow \M'_0 \rightarrow ... \rightarrow \M'_k \rightarrow \M'$ with a given probability kernel. Here $q(\mathcal{S}'|\M)$  is the proposal for a new population, 
transitions  $\M'_0 \rightarrow ... \rightarrow \M'_k$ are generated by local MJMCMC within the model space induced by $\mathcal{S}'$, and the transition $\M'_k \rightarrow \M'$ is some randomization at the end of the procedure as described in the next paragraph. 
The following theorem shows the detailed balance equation for the suggested swaps between models.
\begin{theorem}
Assume  $\M\sim p(\cdot|\B y)$ and $(\mathcal{S'},\M_k',\M')$ are generated according to the large jump proposal distribution $q_S(\mathcal{S'}|\M)q_o(\M_k'|\mathcal{S'},\M)q_r(\M'|\mathcal{S},\M_k')$. Assume
further $(\mathcal{S},\M_k)$ are generated according to $\tilde{q}_S(\mathcal{S}|\M',\mathcal{S},\M)q_o(\M_k|\mathcal{S},\M')$. Put
\[
\M^*=\begin{cases}
\M'&\text{with probability $\min\{1,a_{mh}\}$;}\\
\M&\text{otherwise.}
\end{cases}
\]
where
\begin{equation}\label{eq:a.rjmcmc}
a_{mh} = \frac{p(\M'|\B y)q_r(\M|\mathcal{S},\M_k)}{p(\M|\B y)q_r(\M'|\mathcal{S'},\M'_k)}.
\end{equation}
Then  $\M^*\sim p(\cdot|\B y)$.
\end{theorem}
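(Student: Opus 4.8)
The claim is exactly that the one-step RGMJMCMC kernel $\M\mapsto\M^*$ leaves the posterior $p(\cdot|\B y)$ invariant, so it suffices to prove that this kernel is reversible with respect to $p(\cdot|\B y)$. As in the MJMCMC construction \citep{Tjelmeland99modejumping,Hubin2016}, reversibility is checked not directly for the marginal kernel on $\M$ but on the \emph{extended} space that also records the objects generated along the transition: the proposed population $\mathcal{S}'$, the endpoint $\M'_k$ of the local MJMCMC run inside $\mathcal{S}'$, and the backward auxiliaries $(\mathcal{S},\M_k)$. The plan is (i) to write the joint density of the whole forward path and of the whole reverse path on this extended space, (ii) to observe that the auxiliary-generation factors cancel, reducing the identity to an elementary Metropolis--Hastings identity, and (iii) to marginalize out the auxiliaries.

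For a generic transition $\M\to\M'$ started from $\M\sim p(\cdot|\B y)$, the forward path has joint density
\[
p(\M|\B y)\,q_S(\mathcal{S}'|\M)\,q_o(\M'_k|\mathcal{S}',\M)\,q_r(\M'|\mathcal{S}',\M'_k)\,\tilde q_S(\mathcal{S}|\M',\cdot)\,q_o(\M_k|\mathcal{S},\M')\,\min\{1,a_{mh}\},
\]
and, by the symmetry of the construction, the reverse path started from $\M'\sim p(\cdot|\B y)$ has joint density obtained by swapping the roles of $(\mathcal{S}',\M'_k)$ and $(\mathcal{S},\M_k)$ and replacing $\min\{1,a_{mh}\}$ by $\min\{1,a_{mh}^{-1}\}$ (the reverse acceptance ratio equals $a_{mh}^{-1}$, again by symmetry). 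The key observation is that the block of four auxiliary-generation factors, $q_S(\mathcal{S}'|\M)q_o(\M'_k|\mathcal{S}',\M)$ and $\tilde q_S(\mathcal{S}|\M',\cdot)q_o(\M_k|\mathcal{S},\M')$, occurs \emph{identically} in both joint densities: an \emph{active} draw in one direction is precisely a \emph{backward-auxiliary} draw in the other, which holds because the backward population kernel $\tilde q_S$ is defined exactly so as to regenerate the population with the same law that $q_S$ uses and because the same local-optimization kernel $q_o$ is used in both directions. Cancelling this common block, detailed balance on the extended space reduces to
\[
p(\M|\B y)\,q_r(\M'|\mathcal{S}',\M'_k)\,\min\{1,a_{mh}\}\;=\;p(\M'|\B y)\,q_r(\M|\mathcal{S},\M_k)\,\min\{1,a_{mh}^{-1}\},
\]
which is the elementary identity $u\min\{1,v/u\}=v\min\{1,u/v\}$ with $u=p(\M|\B y)q_r(\M'|\mathcal{S}',\M'_k)$ and $v=p(\M'|\B y)q_r(\M|\mathcal{S},\M_k)$, since $a_{mh}=v/u$ by \eqref{eq:a.rjmcmc}. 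The rejection branch ($\M^*=\M$) is trivially in detailed balance. Summing (integrating) out $\mathcal{S},\mathcal{S}',\M_k,\M'_k$ then gives detailed balance for the marginal kernel acting on $\M$, whence $p(\cdot|\B y)$ is invariant and $\M^*\sim p(\cdot|\B y)$.

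The step I expect to require the most care --- and which is the real content of the argument --- is justifying the cancellation rigorously. One must (i) define the backward population kernel $\tilde q_S(\mathcal{S}|\M',\cdot)$ precisely enough that it has the same functional form and dominating measure as the kernel generating the population in the reverse direction, and likewise verify that the local MJMCMC endpoint laws $q_o(\cdot|\mathcal{S}',\M)$ and $q_o(\cdot|\mathcal{S},\M')$ are genuine Markov kernels on the finite sub-model spaces induced by $\mathcal{S}'$ and $\mathcal{S}$ (this is inherited from the validity of Algorithm~\ref{MJMCMCalg0}, but presupposes that $\M$ and $\M'$ are representable within the corresponding populations, so the embedding/projection between populations has to be bookkept); and (ii) handle degenerate configurations --- deterministic components of $q_o$, or a proposal $\M'$ outside the support of $q_r(\cdot|\mathcal{S}',\M'_k)$ --- with the usual convention that the acceptance probability is $0$ in both directions, so that detailed balance still holds trivially there. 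The remaining work (Fubini to exchange the summation over auxiliaries with the kernel, and compatibility of the reference measures) is routine.
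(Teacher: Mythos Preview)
Your proposal is correct and follows essentially the same route as the paper: work on the extended space containing the auxiliary populations and local-search endpoints, observe that the auxiliary-generation factors $q_S(\mathcal{S}'|\M)q_o(\M_k'|\mathcal{S}',\M)$ and $q_S(\mathcal{S}|\M')q_o(\M_k|\mathcal{S},\M')$ cancel, and reduce to the standard Metropolis--Hastings identity. The paper's presentation is a shade more compact: it defines the extended target $\bar p(\M,\mathcal{S}',\M_k')=p(\M|\B y)q_S(\mathcal{S}'|\M)q_o(\M_k'|\mathcal{S}',\M)$, treats $(\M',\mathcal{S},\M_k)$ as a proposal drawn from $q_r(\M'|\mathcal{S}',\M_k')q_S(\mathcal{S}|\M')q_o(\M_k|\mathcal{S},\M')$, and simply computes the resulting MH ratio, which collapses to $a_{mh}$; you instead write out both joint path densities and verify detailed balance directly, which is the same argument unfolded. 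Your caveats about $\tilde q_S$ needing to coincide in law with $q_S$ and about degenerate supports are well placed and are exactly the implicit assumptions the paper's short proof is using.
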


\begin{proof}
Define
\[
\bar{p}(\M,\mathcal{S'},\M_k')\equiv p(\M|\B y)q_\mathcal{S}(\mathcal{S'}|\M)q_o(\M_k'|\mathcal{S'},\M).
\]
Then by construction $(\M,\mathcal{S'},\M_k')\sim \bar{p}(\M,\mathcal{S},\M_k')$.
Define $(\M',\mathcal{S},\M_k)$ to be a proposal from the distribution  $q_r(\M'|\mathcal{S},\M_k')q_S(\mathcal{S}|\M')q_o(\M_k|\mathcal{S},\M')$.
Then the Metropolis Hastings acceptance ratio becomes
\[
\frac{\bar{p}(\M',\mathcal{S},\M_k)q_r(\M|\mathcal{S},\M_k)q_S(\mathcal{S'}|\M)q_o(\M_k'|\mathcal{S'},\M)}{\bar{p}(\M,\mathcal{S'},\M_k')q_r(\M'|\mathcal{S}',\M_k')q_S(\mathcal{S}|\M')q_o(\M_k|\mathcal{S},\M')}
\]
which reduces to $a_{mh}$. 
\end{proof}

From this theorem it follows that if the Markov chain is irreducible in the model space then it is ergodic and converges to the right posterior distribution.
The described procedure marginally generates samples from the target distribution, i.e. according to model posterior probabilities $p(\M|\By)$. Note that the populations themselves do not have to be stored, they are only needed for the generation of new models. Instead of using the approximation~\eqref{approxpost}  one can get frequency based estimates of the model posteriors $p(\M|\By)$. For a sequence of simulated models $\M^1,...,\M^W$ from an ergodic MCMC algorithm with $p(\M|\By)$ as a stationary distribution  it holds that
\begin{equation}\label{map2}
\widetilde{p}(\M|\By)=W^{-1}\sum_{i=1}^W \text{I}(\M^{(i)} = \M) \xrightarrow[W\rightarrow\infty]{d} p(\M|\By) \; 
\end{equation}
and similar results are valid for estimates of the posterior marginal inclusion probabilities~\eqref{marginal_inclusion}.
%In  general the  estimates~\eqref{approxpost} tend to be  more accurate, as demonstrated by \cite{Hubin2016}.

Proposals $q_\mathcal{S}(\mathcal{S}'|\M)$  are obtained as follows.
First all members of $\M$ are included. Then additional
features are added  similarly as described in Section~\ref{populch} but with $\mathcal{S}_t$ replaced by the population including all components in $\M$. An adaptive version of this can be achieved by
dynamically changing $\mathcal{F}_0$ to include all features that previously have been considered,  the validity of which is explained in Section~\ref{tricks}.

The randomization $\M'\sim q_r(\M|\mathcal{S}',\M_k')$ is performed by possible swapping of the features within $\mathcal{S}'$, each with a small probability $\rho_r$.
Note that this might give a reverse probability
$q_r(\M|\mathcal{S},\M_k)$ being zero if $\mathcal{S}$ does not include all components in $\M$. In that case the proposed model is not accepted. Otherwise 
the ratio of the proposal probabilities can be written as $\frac{q_r(\M|\mathcal{S},\M_k)}{q_r(\M'|\mathcal{S}',\M'_k)} = \rho_r^{d(\M,\M_k)-d(\M',\M'_k)}$, where $d(\cdot,\cdot)$ is the Hamming distance (the number of components differing).

\subsection{Important computational tricks}\label{tricks}
To make the algorithms work sufficiently fast our implementation includes several tricks to be described below. 
\subsubsection*{Delayed rejection}
In order to make the computations more efficient and avoid  unnecessary backward searches we make use of the so called delayed acceptance approach. 
The most computationally demanding parts of the RGMJMCMC algorithms are the forward and backward MCMC searches (or optimizations). Often the proposals generated by forward search have a very small probability $\pi(\M')$ resulting in a low acceptance probability regardless of the way the backwards auxiliary variables are generated. In such cases, one would like to reject directly without performing the backward search. This can be achieved by the delayed acceptance procedure~\citep{christen2005markov,banterle2015accelerating} which can be applied in our case due to the following result:
\begin{theorem}
Assume $\M\sim p(\cdot|\B y)$ and $\M'$ is generated according to the RGMJMCMC algorithm. Accept $\M'$ if
both
\begin{enumerate}
\item $\M'$ is preliminarily accepted with a probability $\min\{1,\tfrac{p(\M'|\B y)}{p(\M|\B y)}\}$
\item and is finally accepted with a probability 
$\min\{1,\tfrac{q_r(\M|\mathcal{S},\M'_k)}{q_r(\M'|\mathcal{S}',\M_k)}\}$.
\end{enumerate}
Then also $\M \sim p(\cdot|\B y)$.
\end{theorem}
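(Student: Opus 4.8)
The plan is to recognize the two-stage accept/reject rule as an ordinary Metropolis--Hastings step whose proposal kernel has been enlarged so as to absorb the preliminary acceptance. Write $q^{\mathrm{full}}(\M'|\M)$ for the RGMJMCMC proposal density used in the preceding theorem, i.e. the joint density of drawing $\mathcal{S}'$ from $q_S(\cdot|\M)$, the intermediate chain $\M'_0,\dots,\M'_k$ from $q_o$, the endpoint $\M'$ from $q_r$, and the backward auxiliaries $(\mathcal{S},\M_k)$ from their respective kernels. From the proof of that theorem we already know that using $\min\{1,a_{mh}\}$ with $a_{mh}=\tfrac{p(\M'|\B y)\,q_r(\M|\mathcal{S},\M_k)}{p(\M|\B y)\,q_r(\M'|\mathcal{S}',\M'_k)}$ leaves $p(\cdot|\B y)$ invariant, the $q_S$ and $q_o$ factors having cancelled in the ratio.

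First I would set $\alpha_1(\M,\M')=\min\{1,\,p(\M'|\B y)/p(\M|\B y)\}$ and observe that ``propose $\M'$, then preliminarily accept'' is itself a sub-stochastic proposal kernel with density $\tilde q(\M'|\M)=q^{\mathrm{full}}(\M'|\M)\,\alpha_1(\M,\M')$, the remaining mass being assigned to staying at $\M$, which is harmless. Treating $\tilde q$ as the proposal, the second-stage probability required to keep $p(\cdot|\B y)$ stationary is the usual Metropolis--Hastings correction $\alpha_2(\M,\M')=\min\{1,\ \tfrac{p(\M'|\B y)\,\tilde q(\M|\M')}{p(\M|\B y)\,\tilde q(\M'|\M)}\}$, and the composite rule will then automatically be a valid MH update.

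The key step is a one-line identity: because the first-stage ratio is exactly the posterior ratio, one has $\alpha_1(\M',\M)/\alpha_1(\M,\M')=\min\{1,p(\M|\B y)/p(\M'|\B y)\}\,/\,\min\{1,p(\M'|\B y)/p(\M|\B y)\}=p(\M|\B y)/p(\M'|\B y)$, in either of the two cases according to which of $p(\M|\B y),p(\M'|\B y)$ is larger. Substituting $\tilde q(\M|\M')/\tilde q(\M'|\M)=[q^{\mathrm{full}}(\M|\M')/q^{\mathrm{full}}(\M'|\M)]\cdot[\alpha_1(\M',\M)/\alpha_1(\M,\M')]$ into $\alpha_2$, the posterior factors cancel, and reusing the cancellation of the $q_S,q_o$ terms exactly as in the previous theorem one is left with $\alpha_2(\M,\M')=\min\{1,\ q_r(\M|\mathcal{S},\M_k)/q_r(\M'|\mathcal{S}',\M'_k)\}$, which is the proposal ratio appearing in step~2 of the statement. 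Hence the two-stage procedure is an honest MH step with proposal $\tilde q$ and acceptance $\alpha_2$, so the updated state is distributed according to $p(\cdot|\B y)$; since the first stage only requires $p(\M'|\B y)$ (already available from the forward search), a preliminary rejection skips the backward search entirely, which is the computational point.

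The main obstacle I anticipate is bookkeeping rather than genuine mathematics: one must ensure the backward auxiliaries $(\mathcal{S},\M_k)$ entering $\tilde q(\M|\M')$ are generated by the same mechanism as in the reversible-GMJMCMC theorem, so that the $q_S$/$q_o$ cancellation transfers verbatim, and one must dispose of the edge case $q_r(\M|\mathcal{S},\M_k)=0$ (when $\mathcal{S}$ fails to contain all components of $\M$), in which case $\alpha_2=0$ and the move is rejected, consistently with the remark following the reversibility theorem. Irreducibility, and hence ergodicity and convergence to $p(\M|\B y)$, is inherited unchanged from the underlying RGMJMCMC chain.
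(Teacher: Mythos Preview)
Your argument is correct and self-contained, but it proceeds differently from the paper. The paper's proof simply factors the acceptance ratio from Theorem~2 as $a_{mh}=a_{mh}^1\cdot a_{mh}^2$ with $a_{mh}^1=p(\M'|\B y)/p(\M|\B y)$ and $a_{mh}^2$ the ratio of randomization kernels, checks that each factor satisfies the swap condition $a_{mh}^j(\M,\ldots;\M',\ldots)=\big[a_{mh}^j(\M',\ldots;\M,\ldots)\big]^{-1}$, and then invokes the general delayed-acceptance result of \citet{banterle2015accelerating}. Your route instead absorbs the first stage into the proposal, forming $\tilde q=q^{\mathrm{full}}\cdot\alpha_1$, and shows directly via the identity $\alpha_1(\M',\M)/\alpha_1(\M,\M')=p(\M|\B y)/p(\M'|\B y)$ that the resulting Metropolis--Hastings correction is exactly the second-stage probability. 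This is essentially the \citet{christen2005markov} style of argument; it is more elementary because it does not appeal to an external theorem, at the cost of having to carry the augmented-state bookkeeping explicitly. Both establish the same invariance; your version has the advantage of making transparent \emph{why} the posterior ratio must be the first-stage factor (so that the $\alpha_1$ ratio cancels the posterior ratio in the second stage), while the paper's version makes clearer that any multiplicative splitting of $a_{mh}$ into balanced factors would work equally well.

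One small point of care: your $q^{\mathrm{full}}$ bundles the backward auxiliaries $(\mathcal{S},\M_k)$ into the forward proposal, which only makes literal sense on the augmented state space of Theorem~2 (where $(\mathcal{S},\M_k)$ are part of the proposed state rather than of the kernel). You flag this as bookkeeping, and indeed once you work on that augmented space with target $\bar p$ the cancellation of $q_S$ and $q_o$ carries over verbatim and your computation goes through; just be sure to state it that way rather than as a marginal proposal on $\M$-space alone.
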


\begin{proof}
It holds for $a_{mh}$ given by~\eqref{eq:a.rjmcmc} that
\begin{align*}
a_{mh}(\M,\mathcal{S}',\M_k';\M',\mathcal{S},\M_k)
=&  a_{mh}^1(\M,\mathcal{S}',\M_k';\M',\mathcal{S},\M_k)\times a_{mh}^2(\M,\mathcal{S}',\M_k';\M',\mathcal{S},\M_k)
\intertext{where}
a_{mh}^1(\M,\mathcal{S}',\M_k';\M',\mathcal{S},\M_k)=&\frac{p(\M'|\B y)}{p(\M|\B y)},\quad
a_{mh}^2(\M,\mathcal{S}',\M_k';\M',\mathcal{S},\M_k)=\frac{q_r(\M|\mathcal{S},\M'_{k})}{q_r(\M'|\mathcal{S}',\M_k)}
\end{align*}
Since $a_{mh}^j(\M,\mathcal{S}',\M_k';\M',\mathcal{S},\M_k)
=[a_{mh}^j(\M',\mathcal{S},\M_k;\M,\mathcal{S},\M'_k)]^{-1}$ for $j=1,2$, it follows by the general results in~\citet{banterle2015accelerating} that we obtain an invariant kernel with respect to the target distribution.
\end{proof}

In general delayed acceptance results in a decreased total acceptance rate ~\citep[][remark 1]{banterle2015accelerating}, but it is still worthwhile due to the computational gain by avoiding the backwards search step in case of preliminary rejection.  Delayed acceptance is implemented in the RGMJMCMC algorithm of our R-package and is used in the examples of Section \ref{section4}.

\subsubsection*{Adaptive proposals}

Another important trick consists of using the chain's history to approximate marginal inclusion probabilities and utilize the latter for the proposal of new populations. Using any measure based on the marginal inclusion probabilities is valid for the reversible algorithm by Theorem 1 from \citet{roberts2007} for which two conditions need to be satisfied:
\begin{description}
\item [{\rm \it Simultaneous uniform ergodicity:}] For any set of tuning parameters, the Markov chain should be ergodic. \\
In our case, we have a finite %and countable 
number of models in the model space and hence their enumeration can be performed theoretically within a limited time provided irreducibility of the algorithm. Hence the first condition of the theorem is satisfied.

\item [{\rm \it Diminishing adaptation:}] The difference between following transition probabilities converge to zero.\\
As long as the inclusion probabilities that are used are truncated away from 0 and 1 by a small value $\varepsilon$, the frequencies will converge to the true marginal inclusion probabilities and the diminishing adaptation condition is satisfied. This is again possible because of the irreducibility of the constructed Markov chains in the finite model space.

\end{description}
\subsubsection*{Parallelization strategy}
Due to our interest in exploring as many \emph{unique}  high quality models as possible and doing it as fast as possible, running multiple parallel chains is likely to be computationally beneficial compared to running one long chain. The process can be embarrassingly parallelized into $B$ chains. If one is mainly interested in model probabilities, then equation~\eqref{approxpost} can be directly applied with $\MS^*$ now being the set of unique models visited within all runs. A  more memory efficient alternative is to utilize the following posterior estimates based on weighted sums over individual runs:
\begin{equation}\label{weighted_sum}
\tilde{p}(\Delta | \By) = \sum_{b=1}^B u_b\tilde{p}_b(\Delta | \By)\;.
\end{equation}
Here $u_b$ is a set of arbitrary normalized weights and $\tilde{p}_b(\Delta |\By)$ are the posteriors obtained with either equation~\eqref{approxpost} or~\eqref{map2} from run $b$ of GMJMCMC or RGMJMCMC. Due to the irreducibility of the GMJMCMC procedure it holds that $\lim_{k\rightarrow \infty}\tilde{p}(\Delta| \By) = p(\Delta| \By)$ where $k$ is the number of iterations. Thus for any set of normalized weights the approximation $\tilde{p}(\Delta| \By)$ converges to the true posterior probability ${p}(\Delta| \By)$ and one could use for example $u_b = \frac{1}{B}$. However, uniform weights have the disadvantage of potentially giving too much weight to posterior estimates from  chains that have not quite converged. In the following heuristic improvement $u_b$ is chosen to be proportional to the posterior mass detected by run $b$, 
\begin{align*}
u_b=&\frac{\sum_{\M \in {\MS^{*}}_b} p(\By| \M) p(\M)}{\sum_{b=1}^B\sum_{\M' \in {\MS^{*}}_b } p(\By| \M') p(\M')}\; .
\end{align*}
This choice  indirectly penalizes chains that cover smaller portions of the model space. When estimating posterior probabilities using these weights we only need, for each run,  to store the following quantities: $\tilde{p}_b(\Delta| \By)$ for all statistics $\Delta$ of interest and $s_b = \sum_{\M' \in {\MS^{*}}_b} p(\By|\M') p(\M')$ as a \textit{'sufficient'} statistic of the run. There is no further need of data transfer between processes. A proof that this choice of weights gives consistent estimates of posterior probabilities is given in \citet{Hubin2017}.

\section{{Applications}}\label{section4}
In this section we will first present three examples addressing prediction in the classification setting, where the  performance of  DBRM with GMJMCMC and RGMJMCMC 
is compared with nine competing algorithms. 
Then we present two examples 
of model inference after fitting  deep regression models with GMJMCMC and RGMJMCMC.  
Additionally two examples are presented in the Appendix, where the first one considers data simulated using a logic regression model and the second one illustrates the extended DBRM including latent Gaussian variables to analyze epigenetic data.

\subsection{Prediction}

\newcounter{Example} 

The first three examples of binary classification use the following  publicly available data sets: NEO objects data from NASA Space Challenge 2016 ~\citep{Neodata}, a breast cancer data set~\citep{breastdata} and some data concerned with spam emails \citep{cranor1998spam}. The performance of DBRM is compared with the following competitive algorithms:  tree based (TXGBOOST) and linear (LXGBOOST) gradient boosting machines, elastic networks (LASSO and RIDGE), deep dense neural networks with multiple hidden fully connected layers (DEEPNETS), random forest (RFOREST), naive Bayes (NBAYES), and simple \textit{frequentist} logistic regressions (LR). The corresponding R libraries, functions and their parameters settings are given in supplementary scripts.
%Appendix~\ref{appref} of the article. Parameter settings that were used for these algorithms are provided in Table~\ref{TunParam} in the Appendix~\ref{ap:tuning}.   

DBRM is fitted using either the GMJMCMC algorithm (DBRM{\_}G) or the reversible version (DBRM{\_}R), where, additionally to the standard algorithms parallel versions using $B=32$ threads were applied (DBRM{\_}G{\_}PAR and DBRM{\_}R{\_}PAR). 
%Here the index $x = 1,\dots,4$ refers to the 4 strategies of feature engineering described in  Section \ref{Subsec:FeatureEngineering}. 
For all classification examples the set of non-linear transformations is defined as  $\mathcal{G}=\{\text{gauss}(x),\text{tanh}(x),\text{atan}(x),\text{sin}(x)\}$, with $\text{gauss}(x) = e^{-x^2}$.
Additionally  a DBRM model with maximum depth $D_{max}=0$ (LBRM) is included, which corresponds to a linear Bayesian model  using only the original covariates.

Within DBRM, we apply logistic regression with independent observations, namely: 
\begin{subequations}\label{classification_modeleq}
\begin{align} 
  Y_i|\rho_i \sim& \ \text{Binom}(1,\rho_i) \\
   \text{logit}(\rho_i)  =& \ \beta_0  + \sum\limits_{j=1}^{q} \gamma_{j}\beta_{j}F_{j}(\boldsymbol{x}_{i}).
\end{align}
\end{subequations}
The Bayesian model
uses the model structure prior~\eqref{eq:modelprior}  with $a=e^{-2}$ and $Q = 20$. The resulting posterior corresponds to performing model selection with a criterion whose penalty on the complexity is similar to the AIC criterion, which is known (at least for the linear model) to be asymptotically optimal in terms of prediction~\citep{BA_2002}.
%and guarantees some optimality in terms of leave one out cross validation\todo{Geir: Include a reference? \\FF: I am not sure about the second half of the sentence. I think it is rather the case that AIC and LOOCV are asymptotically equivalent. Perhaps we might discard the second part of the sentence. We could simply cite the book by Burnham and Anderson (2002). If you want to be more specific cite the following two papers:\\Shibata,  R.  (1983)  Asymptotic  mean  efficiency  of  a  selection  of  regression  variables.
%Ann.  Inst. Statisti. Math. 35 , 415-423. \\
%Shao, J. (1997) An asymptotic theory for linear model selection (with discussion).
%Statistica Sinica,7, 221-242.}. 

The logistic regression model does not have a dispersion parameter and the Bayesian model is completed by using Jeffrey's prior for the regression parameters
\begin{align*}
%  p(\boldsymbol\gamma)\propto &  \prod_{j=1}^{q} \exp(-\gamma_j2c(F_j(\B x)))\label{gammodel2}\\
 p(\boldsymbol{\beta}^{\M})=&|J^{\boldsymbol{\gamma}}_n(\boldsymbol{\beta}^{\M})|^{\frac{1}{2}}\;.
\end{align*}
Here 
%$c(F_j(\B x))$ is the complexity of feature $j$. Finally, $\boldsymbol{\hat\beta}$ are the maximum likelihood estimates of the coefficients and 
$|J^{\boldsymbol{\gamma}}_n(\boldsymbol{\beta}^{\M})|$ is the determinant of the  Fisher information matrix. 

Predictions based on DBRM are made according to
$$
\hat{y}_i^* = \text{I}\left(\hat p(Y_i^*=1|\B y)\ge \eta \right),
$$
where we have used the notation $Y_i^*$ for a response variable in the test set. Furthermore
$$
\hat p(Y_i^*=1|\B y)=\sum_{\M \in \MS*} \hat p(Y_i^*=1|\M,\By) \hat p(\M|\By)
$$
with $\MS*$ denoting the set of all explored models and 
$$
\hat p(Y_i^*=1|\M,\By)= p(Y_i^*=1|\M,\widehat{\boldsymbol{\beta}}^{\M},\By)
$$
where $\widehat{\boldsymbol{\beta}}^{\M}$ is the
posterior mode in $p(\boldsymbol{\beta}^{\M}|\M,\B y)$. The most common threshold for  prediction is $\eta  = 0.5$. 
%\todo{Would it not better to say that choosing this prior is then equivalent to applying Laplace's approximation?} 
Calculation of marginal likelihoods
are performed through the Laplace approximation. 
%
%of the marginal likelihood \citep{Hubin2017} are considered.

To evaluate the predictive performance of algorithms we report the accuracy of predictions (ACC), false positive rate (FPR) and false negative rate (FNR), defined as follows:
\begin{align*}
\text{ACC} =& \frac{\sum_{i=1}^{n_p}\text{I}(\hat y_i^*=y_i^*) }{n_p};\\
\text{FPR} =& \frac{\sum_{i=1}^{n_p} \text{I}\left(y_i^*=0,\hat y_i^*=1\right)}{\sum_{i=1}^{n_p}  \text{I}\left(y_i^* = 0\right)};\\
\text{FNR} =& \frac{\sum_{i=1}^{n_p}  \text{I}\left(y_i^*=1,\hat y_i^*=0\right)}{\sum_{i=1}^{n_p} \text{I}\left(y_i^*=1\right)}.
\end{align*}
Here $n_p$ is the size of the test data sample.  For algorithms with a stochastic component, $N = 100$ runs were performed in the training data set and the test set was analysed with each of the obtained models, where we kept the split between training and test samples fixed. We then report the median as well as the minimum and maximum of the evaluation measures across those runs. For deterministic algorithms only one run was performed.

\refstepcounter{Example} \label{Ex:NeoAsteroids}
\subsubsection*{Example \arabic{Example}: Neo asteroids classification}
The dataset consists of characteristic measures of 20766 asteroids, some of which are class\-ified as potentially hazardous objects (PHO), whilst others are not. Measurements of the following nine explanatory variables are available: \textit{Mean anomaly, Inclination, Argument of perihelion, Longitude of the ascending node, Rms residual, Semi major axis, Eccentricity, Mean motion, Absolute magnitude}. 

\begin{table}[tbh]{
\caption{\label{t1}Comparison of performance (ACC, FPR, FNR) of different algorithms for NEO objects data. For methods with random outcome the median measures (with minimum and maximum in parentheses) are displayed.
The algorithms are sorted according to median accuracy.}
\resizebox{\textwidth}{!}{
\begin{tabular}{llll}%
\hline 
Algorithm&ACC&FNR&FPR\\\hline
LBRM&0.9999 (0.9999,0.9999)&0.0001 (0.0001,0.0001)&0.0002 (0.0002,0.0002)\\
DBRM{\_}G{\_}PAR&0.9998 (0.9986,1.0000)&0.0002 (0.0001,0.0021)&0.0000 (0.0000,0.0000)\\
DBRM{\_}R{\_}PAR&0.9998 (0.9964,0.9999)&0.0002 (0.0001,0.0052)&0.0000 (0.0000,0.0000)\\
%DBRM{\_}G{\_}3&0.9998 (0.9959,1.0000)&0.0002 (0.0001,0.0056)&0.0002 (0.0000,0.0042)\\
DBRM{\_}R&0.9998 (0.9946,1.0000)&0.0002 (0.0001,0.0076)&0.0002 (0.0000,0.0056)\\
%DBRM{\_}R{\_}3&0.9998 (0.9953,1.0000)&0.0002 (0.0001,0.0068)&0.0002 (0.0000,0.0070)\\
%DBRM{\_}R{\_}4&0.9998 (0.9945,1.0000)&0.0002 (0.0001,0.0080)&0.0002 (0.0000,0.0069)\\
DBRM{\_}G&0.9998 (0.9942,1.0000)&0.0002 (0.0001,0.0082)&0.0002 (0.0000,0.0072)\\
%DBRM{\_}G{\_}2&0.9998 (0.9933,1.0000)&0.0002 (0.0001,0.0089)&0.0002 (0.0000,0.0048)\\
%DBRM{\_}G{\_}4&0.9998 (0.9932,0.9999)&0.0002 (0.0001,0.0097)&0.0002 (0.0000,0.0042)\\
%DBRM{\_}R{\_}2&0.9998 (0.9925,1.0000)&0.0002 (0.0001,0.0105)&0.0002 (0.0000,0.0032)\\
LASSO&0.9991 (-,-)&0.0013 (-,-)&{0.0000} (-,-)\\
RIDGE&0.9982 (-,-)&0.0026 (-,-)&0.0000 (-,-)\\
LXGBOOST&0.9980 (0.9980,0.9980)&0.0029 (0.0029,0.0029)&0.0000 (0.0000,0.0000)\\
LR&0.9963 (-,-)&0.0054 (-,-)&0.0000 (-,-)\\
DEEPNETS&0.9728 (0.8979,0.9979)&0.0384 (0.0018,0.1305)&0.0000 (0.0000,0.0153)\\
TXGBOOST&0.8283 (0.8283,0.8283)&0.0005 (0.0005,0.0005)&0.3488 (0.3488,0.3488)\\
RFOREST&0.8150 (0.6761,0.9991)&0.1972 (0.0003,0.3225)&0.0162 (0.0000,0.3557)\\
NBAYES&0.6471 (-,-)&0.0471 (-,-)&0.4996 (-,-)\\
\hline
\end{tabular}}
%\\[1pt]
}
\end{table}
 The training sample consisted of $n=64$ objects (32 of which are PHO, whilst the other 32 are not) and the test sample of the remaining $n_p=20702$ objects. The results of Table~\ref{t1} show that even with such a small training set most methods tend to perform very well. The naive Bayes classifier has the smallest accuracy with a huge number of false positives. The tree based methods also have comparably small accuracy, where tree based gradient boosting in addition delivers too many false positives. Random forests tend to have on average too many false negatives, though there is huge variation of performance between different runs ranging from almost perfect accuracy down to accuracy as low as the naive Bayes classifier.   
 
 The DBRM methods are among the best methods for this data set and there is practically no difference between DBRM\_R and DBRM\_G. The best median performance has LBRM which indicates that non-linear structures do not play a big role in this example and all the other algorithms based on linear features (LASSO, RIDGE, logistic regression, linear gradient boosting)  performed similarly well. LBRM gives the same result in all simulation runs, the parallel versions of DBRM give almost the same model as LBRM and only rarely add some non-linear features, whereas the single threaded versions of DBRM much more often include non-linear features (Table \ref{Tab:complexity}). The slight variation between simulation runs  suggests that in spite of the general good performance of DBRM\_G and DBRM\_R both algorithms have not fully converged in some runs.

 %However, it might also be the case that the training data sample was too small to be able to consistently detect significantly important non-linearities.   Only one non-linear feature, $atan(eccentricity)$,  appears to be of some importance and has been reported in more than 10 DBRM runs. 
 
 %Other non-linear features found in the 100 runs  are reported in excel spreadsheets available as supplementary material.

\refstepcounter{Example} \label{Ex:BreastCancer}
\subsubsection*{Example \arabic{Example}: Breast cancer classification}

%Example \ref{Ex:BreastCancer}

The second example consists of breast cancer data with observations from 357 benign and 212 malignant tissues. Features are obtained from digitized images of fine needle aspirates (FNA) of breast mass.
Ten real-valued features are computed for each cell nucleus: \emph{radius, texture, perimeter, area, smoothness, compactness, concavity, concave points, symmetry} and \emph{fractal dimension}. For each feature, the mean, standard error, and "worst" or largest value (mean of the three largest values) per image were computed, resulting in 30 input variables per image, see~~\citet{breastdata} for more details on how the features were obtained. A randomly selected quarter of the images was used as a training data set, the remaining images as a test set.

% \begin{table}[h]{
% \resizebox{\textwidth}{!}{
% \begin{tabular}{lccccccccc}%
% \hline 
% Algorithm&min.p&med.p&max.p&min.fn&med.fn&max.fn&min.fp&med.fp&max.fp\\ \hline 
% RIDGE&\textbf{0.9742}&\textbf{0.9742}&0.9742&0.0592&0.0592&0.0592&0.0037&\textbf{0.0037}&\textbf{0.0037}\\
% GMJMCMC&0.9437&0.9695&\textbf{0.9812}&0.0479&0.0536&0.1067&\textbf{0.0000}&0.0148&0.0361\\
% DEEPNETS&0.9225&0.9695&0.9789&\textbf{0.0305}&0.0674&0.1167&0.0000&0.0074&0.0949\\
% RGMJMCMC&0.9554&0.9683&0.9789&0.0479&0.0536&0.0809&0.0037&0.0148&0.0361\\
% NAIVEBAYESS&0.9671&0.9671&0.9671&0.0479&0.0479&0.0479&0.0220&0.0220&0.0220\\
% MJMCMC&0.9624&0.9624&0.9624&0.0756&0.0756&0.0756&0.0111&0.0111&0.0111\\
% LASSO&0.9577&0.9577&0.9577&0.0756&0.0756&0.0756&0.0184&0.0184&0.0184\\
% LXGBOOST&0.9554&0.9554&0.9554&0.0809&0.0809&0.0809&0.0184&0.0184&0.0184\\
% TXGBOOST&0.9484&0.9531&0.9601&0.0536&0.0647&0.0756&0.0291&0.0326&0.0361\\
% RFOREST&0.9038&0.9343&0.9624&0.0422&0.0914&0.1675&0.0000&0.0361&0.1010\\
% LR&0.9272&0.9272&0.9272&0.0305&\textbf{0.0305}&\textbf{0.0305}&0.0887&0.0887&0.0887\\ \hline 
% \end{tabular}}
% \\[1pt]
% \caption{Comparison of performance (Precision, FPR, FNR) of different algorithms for breast cancer data}\label{t2}
% }

% \end{table}

\begin{table}[tbh]{
\caption{\label{t2}Comparison of performance  (ACC, FPR, FNR) of different algorithms for breast cancer data. See caption of Table~\ref{t1} for details.}   
\resizebox{\textwidth}{!}{
\begin{tabular}{llll}%
\hline 
Algorithm&ACC&FNR&FPR\\\hline
DBRM{\_}R{\_}PAR&{0.9765} ({0.9695},{0.9812})&0.0479 (0.0479,0.0479)&0.0074 ({0.0000},0.0184)\\
DBRM{\_}G{\_}PAR&0.9742 (0.9695,0.9812)&0.0479 (0.0479,0.0536)&0.0111 (0.0000,0.0184)\\
RIDGE&{0.9742} (-,-)&0.0592 (-,-)&{0.0037} (-,-)\\
LBRM&0.9718 (0.9648,0.9765)&0.0592 (0.0536,0.0702)&0.0074 (0.0000,0.0148)\\
DBRM{\_}G&0.9695 (0.9554,0.9789)&0.0536 (0.0479,0.0809)&0.0148 (0.0037,0.0326)\\
%DBRM{\_}G{\_}3&0.9695 (0.9507,0.9789)&0.0536 (0.0479,0.0862)&0.0148 (0.0000,0.0361)\\
%DBRM{\_}R{\_}2&0.9695 (0.9554,0.9789)&0.0536 (0.0422,0.0756)&0.0148 (0.0000,0.0396)\\
%DBRM{\_}R{\_}4&0.9695 (0.9577,0.9789)&0.0536 (0.0479,0.0756)&0.0148 (0.0000,0.0361)\\
DEEPNETS&0.9695 (0.9225,0.9789)&0.0674 ({0.0305},0.1167)&0.0074 (0.0000,0.0949)\\
DBRM{\_}R&0.9671 (0.9577,0.9812)&0.0536 (0.0479,0.0702)&0.0148 (0.0000,0.0361)\\
%DBRM{\_}R{\_}3&0.9671 (0.9577,0.9789)&0.0536 (0.0422,0.0756)&0.0148 (0.0037,0.0361)\\
%DBRM{\_}G{\_}4&0.9671 (0.9577,0.9789)&0.0536 (0.0305,0.0756)&0.0184 (0.0000,0.0361)\\
%DBRM{\_}G{\_}2&0.9671 (0.9531,0.9789)&0.0536 (0.0422,0.0862)&0.0184 (0.0000,0.0361)\\
LR&0.9671 (-,-)&0.0479 (-,-)&0.0220 (-,-)\\
LASSO&0.9577 (-,-)&0.0756 (-,-)&0.0184 (-,-)\\
LXGBOOST&0.9554 (0.9554,0.9554)&0.0809 (0.0809,0.0809)&0.0184 (0.0184,0.0184)\\
TXGBOOST&0.9531 (0.9484,0.9601)&0.0647 (0.0536,0.0756)&0.0326 (0.0291,0.0361)\\
RFOREST&0.9343 (0.9038,0.9624)&0.0914 (0.0422,0.1675)&0.0361 (0.0000,0.1010)\\
NBAYES&0.9272 (-,-)&{0.0305} (-,-)&0.0887 (-,-)\\ \hline
\end{tabular}}
} 
\end{table}

Qualitatively the results  presented in Table~\ref{t2} are quite similar to those from Example \ref{Ex:NeoAsteroids}. The naive Bayes classifier and random forests have the worst performance where NBAYES gives too many false positives and RFOREST too many false negatives, though less dramatically than in the previous example. All the algorithms based on linear features are performing much better which once again indicates that in this dataset non-linearities are not of primary importance. Nevertheless both versions of the DBRM algorithm, and in this example also DEEPNETS, are among the best performing algorithms. DBRM run on 32 parallel threads gives the highest median accuracy and performs substantially better than DBRM run only on one thread.  % Excel sheets with the non-linear features selected by DBRM are again reported as supplementary material.

\refstepcounter{Example} \label{Ex:Spam}
\subsubsection*{Example \arabic{Example}: Spam classification}
In this example we are using the data from \citet{cranor1998spam} for detecting spam emails. The concept of "spam" is extremely diverse and includes advertisements for products and web sites, money making schemes, chain letters, the spread of unethical photos and videos, et cetera. In this data set the collection of spam emails consists of  messages which have been actively marked as spam by users, whereas non-spam emails consist of messages filed as work-related or personal. 
%and results in highly specific non-spam indicators like the name 'George' or the area code '650'. 
The data set includes 4601 e-mails, with 1813 labeled as spam. For each e-mail, 58 characteristics are listed which can serve as explanatory input  variables. These include 57 continuous and 1 nominal variable, where most of these are concerned with the frequency of particular words or characters.  Three variables provide different measurements on the sequence length of consecutive capital letters.  The data was randomly divided into a training data set of 1536 e-mails and a test data set of the remaining 3065 e-mails.

%The data-set thus emulates the construction of a personalized spam filter.
%To generate a general purpose spam filter one would either have to blind such non-spam indicators or get a much wider collection of e-mails classified as non-spam.
%Other variables include 48 continuous real in the range [0,100] attributes of percentage of words in the e-mail that match WORD, i.e. 100 * (number of times the WORD appears in the e-mail) / total number of words in e-mail.  A "word" in this case is any string of alphanumeric characters bounded by non-alphanumeric characters or end-of-string.
%The data also contains 6 continuous real [0,100] variables giving the percentage of characters in the e-mail that match CHAR, i.e. 100 * (number of CHAR occurences) / total characters in e-mail. 
%Another variable is associated with average length of uninterrupted sequences of capital letters. One natural valued attribute describing the length of longest uninterrupted sequence of capital letters and one natural valued variable describing the total number of capital letters in the e-mail are also present. 

 Table~\ref{t3} reports the results for the different methods. Once again the naive Bayes classifier performed worst. Apart from that the order of performance of the algorithms is quite different from the first two examples. The tree based algorithms show the highest accuracy whereas the five algorithms based on linear features have less accuracy. This indicates that non-linear features are important in this dataset to discriminate between spam and non-spam. As a consequence DBRM performs  better than LBRM. 
 
Specifically the parallel version of DBRM provides almost the same accuracy as DEEPNETS, with the minimum accuracy over 100 runs being actually larger, the median and maximum accuracy quite comparable. However, tree based gradient boosting and random forests perform substantially better which is mainly due to the fact that they can optimize cutoff points for the continuous variables. One way to potentially improve the performance of DBRM would be to include multiple characteristic functions, like for example  $\text{I}(x>\mu_{x}), \text{I}(x<F^{-1}_{0.25}(x)), \text{I}(x>F^{-1}_{0.75}(x))$, into the set of non-linear transformations $\cal{G}$ to allow the generation of features with splitting points like in random trees.
%allow more splitting points in random trees.
%  \todo{ AH: But what we can do is to add more splitting functions, e.g. $I(x>mean(x)), I(x>q(x,025)), I (x>q(x,0.75))$
%  \\ FF: I would probably only add this as a comment.} 
 %Notice that for this example performance of non-linear approaches is significantly better than performance of the linear approaches, indicating potential non-linearities in the features. 

\begin{table}[tb]{
\caption{Comparison of performance (ACC, FPR, FNR) of different algorithms for spam data. For methods with random outcome the median measures (with minimum and maximum in parentheses) are displayed.
The algorithms are sorted according to median power.}\label{t3}
\resizebox{\textwidth}{!}{
\begin{tabular}{llll}%
\hline 
Algorithm&ACC&FNR&FPR\\\hline
TXGBOOST&0.9465 (0.9442,0.9481)&0.0783 (0.0745,0.0821)&0.0320 (0.0294,0.0350)\\
RFOREST&0.9328 (0.9210,0.9413)& 0.0814 (0.0573,0.1174)&0.0484 (0.0299,0.0825)\\
DEEPNETS&0.9292 (0.9002,0.9357)& 0.0846 (0.0573,0.1465)&0.0531 (0.0310,0.0829)\\
DBRM{\_}R{\_}PAR&0.9268 (0.9162,0.9390)&0.0897 (0.0780,0.1057)&0.0538 (0.0415,0.0691)\\
DBRM{\_}G{\_}PAR&0.9251 (0.9139,0.9377)&0.0897 (0.0766,0.1024)&0.0552 (0.0445,0.0639)\\
%DBRM{\_}G{\_}2&0.9243 (0.9100,0.9357)&0.0927 (0.0780,0.1103)&0.0545 (0.0445,0.0686)\\
DBRM{\_}G&0.9243 (0.9113,0.9328)&0.0927 (0.0808,0.1116)&0.0552 (0.0465,0.0658)\\
%DBRM{\_}G{\_}3&0.9237 (0.9100,0.9321)&0.0924 (0.0766,0.1122)&0.0548 (0.0474,0.0714)\\
%DBRM{\_}G{\_}4&0.9237 (0.9113,0.9315)&0.0931 (0.0821,0.1077)&0.0562 (0.0470,0.0714)\\
DBRM{\_}R&0.9237 (0.9106,0.9351)&0.0917 (0.0801,0.1116)&0.0557 (0.0474,0.0672)\\
%DBRM{\_}R{\_}4&0.9225 (0.9119,0.9315)&0.0958 (0.0780,0.1116)&0.0550 (0.0450,0.0667)\\
%DBRM{\_}R{\_}3&0.9207 (0.6509,0.9299)&0.0994 (0.0372,0.1231)&0.0545 (0.0440,0.3564)\\
%DBRM{\_}R{\_}2&0.9197 (0.9054,0.9282)&0.1014 (0.0835,0.1168)&0.0545 (0.0479,0.0691)\\
LR&0.9194 (-,-)&0.0681 (-,-)&0.0788 (-,-)\\
LBRM&0.9178 (0.9168,0.9188)&0.1090 (0.1064,0.1103)&0.0528 (0.0523,0.0538))\\
LASSO&0.9171 (-,-)& 0.1077 (-,-)&0.0548  (-,-)\\
RIDGE&0.9152 (-,-)&0.1288 (-,-)& 0.0415 (-,-)\\
LXGBOOST&0.9139 (0.9139,0.9139)&0.1083 (0.1083,0.1083)&0.0591 (0.0591,0.0591)\\
NBAYES&0.7811 (-,-)&0.0801 (-,-)&0.2342 (-,-)\\
\hline
\end{tabular}}
}
\end{table}

\subsubsection*{Complexities of the features for the prediction examples}
One can conclude from these three examples that DBRM has good predictive performance both when non-linear patterns are present (Example \ref{Ex:NeoAsteroids} and \ref{Ex:BreastCancer}) or when they are not (Example \ref{Ex:Spam}). Additionally DBRM has the advantage that its generated features are highly interpretable.  Excel sheets are provided as supplementary material which present all features detected by DBRM with posterior probability larger than 0.1 and Table \ref{Tab:complexity} provides the corresponding frequency distribution of the complexity of these features.    

In Example \ref{Ex:NeoAsteroids} concerned with the asteroid data all reported non-linear features had a complexity of 2. As mentioned previously the parallel version of DBRM detected way less non-linear features than the simple versions which suggests that DBRM\_G and DBRM\_R have not completely converged in some simulation runs. Approximately half of the non-linear features were modifications and the other half interactions. In this example not a single projection was reported in all simulation runs by any of the DBRM implementations.

\begin{table}[]{
\caption{Mean frequency distribution of feature complexities detected by the different DBRM algorithms in 100 simulation runs for the first three examples. The final row for each example gives the mean of total number of features in 100 simulation runs which had a posterior probability larger than 0.1.}\label{Tab:complexity}
\resizebox{\textwidth}{!}{
\begin{tabular}{llllll}%
\multicolumn{6}{l}{\underline{\textbf{Example1}: Asteroid}}\\
%\hline
complexity&DBRM\_G&DBRM\_R&DBRM\_G\_PAR&DBRM\_R\_PAR&LBRM\\
1&8.9600&8.9700&9.0000&9.0000&9.0000\\
2&2.5800&2.6200&0.0500&0.1500&0.0000\\
\hline
Total&11.540&11.590&9.0500&9.1500&9.0000
\\[1mm]
\multicolumn{6}{l}{\underline{\textbf{Example2}: Breast cancer}}\\
%\hline
complexity&DBRM\_G&DBRM\_R&DBRM\_G\_PAR&DBRM\_R\_PAR&LBRM\\
1&11.300&11.730&14.200&10.790&29.830\\
2&3.0900&3.0600&0.0400&0.2100&0.0000\\
3&0.3000&0.0000&0.0000&0.0000&0.0000\\
6&0.0000&0.0100&0.0000&0.0000&0.0000\\
7&0.0000&0.0100&0.0000&0.0000&0.0000\\
\hline
Total&14.420&14.810&14.240&11.000&29.830\\[1mm]
\multicolumn{6}{l}{\underline{\textbf{Example3}: Spam mail}}\\
%\hline
complexity&DBRM\_G&DBRM\_R&DBRM\_G\_PAR&DBRM\_R\_PAR&LBRM\\
1&36.340&36.090&39.870&39.170&49.830\\
2&14.450&14.830&21.470&22.430&0.0000\\
3&2.8300&3.1700&5.2400&5.8100&0.0000\\
4&0.6900&0.5700&1.3600&1.3600&0.0000\\
5&1.1500&1.0900&1.5600&1.6800&0.0000\\
6&0.9200&0.7400&1.2400&1.0700&0.0000\\
7&0.3700&0.4000&0.5700&0.4200&0.0000\\
8&0.2500&0.2200&0.3300&0.1700&0.0000\\
9&0.0400&0.0800&0.1600&0.1100&0.0000\\
$\geq$10&0.1500&0.1100&0.1100&0.1800&0.0000\\
\hline
Total&57.190&57.300&71.910&72.400&49.830\\\hline
\end{tabular}}
}
\end{table}

Also in Example \ref{Ex:BreastCancer} the parallel versions of DBRM reported a substantially smaller number of non-linearities than the single-threaded version. Over all simulation runs only DBRM\_R detected 2 projections (with complexity 6 and 7, respectively). Otherwise in this example interactions were more often detected than modifications. Interestingly the non-linear features reported by the parallel versions of DBRM consisted  only of the following two interactions:  (standard error of the area) $\times$ (worst texture) reported 3 times by DBRM\_G\_PAR  and 10 times by DBRM\_R\_PAR and (worst texture) $\times$ (worst concave points) reported once by DBRM\_G\_PAR  and 11 times by DBRM\_R\_PAR. While LBRM includes almost always all 30 variables in the model (in 100 simulation runs only 17 out of 3000 possible linear features had posterior probability smaller than 0.1), DBRM delivers more parsimonious models. %This is apparently the price one has to pay for searching over a much larger model space. 

In Example \ref{Ex:Spam} there is much more evidence for non-linear structures. The non-linear features with highest detection frequency over simulation runs in this example were always modifications. For DBRM\_R\_PAR there were 10 modifications of depth 2  which were detected in more than 25 simulation runs. For example sin($X_{7}$) was reported 46 times and gauss($X_{36}$) 41 times. The features atan($X_{52}$) and  tanh($X_{52}$) were reported 41 times and 38 times, respectively, which provides strong evidence that a non-linear transformation of $X_{52}$ is an important predictor. For DBRM\_G\_PAR the results are quite similar and the mentioned four modifications are also among the top-ranking non-linear features. Although modifications were most important in terms of replicability over simulation runs, in this example DBRM also found many interactions and projections. 
From the 3204 non-linear features reported by DBRM\_G\_PAR there were 
more than 998 which included one interaction, 116 with two interactions and even 3 features with three interactions. Furthermore there were 353 features including one projection, 12 features with two nested projections and even 3 features where three projections were nested. However, these highly complex features typically occurred only in one or two simulation runs. In spite of the really good performance of the parallel versions of DBMR it seems that even more parallel threads and longer chains might be necessary to get consistent results over simulation runs in this example. 

\subsection{Model inference}

Examples \ref{Ex:JupiterMass} and \ref{Ex:Kepler} are based on data sets describing physical parameters of newly discovered exoplanets. The data was originally collected and continues to be updated by Hanno Rein at the Open Exoplanet Catalogue Github repository \citep{exocat}. The input covariates include planet and host star attributes, discovery methods, and dates of discovery.  We use a subset of $n = 223$ samples containing all planets with no missing values  to rediscover two basic physical laws which involve some non-linearities. 
We compare the performance of DBRM\_G and DBRM\_R when running different numbers of parallel threads. We restrict ourselves to DBRM here because to our best knowledge no other machine learning approaches can be used for the detection of sophisticated non-linear relationships in closed form.

For both examples we utilize DBRM models with conditionally independent Gaussian observations:
\begin{align}
  Y_i|\mu_i,\sigma^2 \sim&  N(\mu_i,\sigma^2),\quad i = 1,...,n\label{exomodel0}\\ 
  \mu_i =& \beta_0  + \sum_{j=1}^{q} \gamma_{j}\beta_{j}F_{j}(\boldsymbol{x}_{i})\label{exomodel1} \; .
\end{align}
We consider two different sets of non-linear transformations, $\mathcal{G}_1 = \{$$\text{sigmoid}(x)$, $\sin(x)$, $\tanh(x)$, $\text{atan}(x)$, $|x|^\frac{1}{3}$$\}$ and $\mathcal{G}_2 = \{\text{sigmoid}(x), \sin(x),\exp(-|x|),\log(|x|+1), |x|^\frac{1}{3}$, $|x|^{2.3}$, $|x|^{3.5}\}$, where we 
 restrict the depth to $D_{max} = 5$ and the maximum number of features in a model to $Q=15$. $\mathcal{G}_1$ is an adaptation of the set of transformations used in the first three examples. Adding $|x|^\frac{1}{3}$ results in a model space which includes a closed form expression of Kepler's 3rd law in Example \ref{Ex:Kepler}.
 $\mathcal{G}_2$ is a somewhat larger set where the last two functions are specifically motivated to facilitate generation of interesting features linking the mass and luminosity of stars \citep{kuiper1938empirical,salaris2005evolution}.  
 
 For the prior of the model structure~\eqref{eq:modelprior} we choose $a=e^{-2 \log n}$ giving a BIC like penalty for the model complexity. The
parameter priors are specified as  
\begin{align}
%p(\boldsymbol\gamma)\propto &  \prod_{j=1}^{q}\exp\left(-2\log n\gamma_jc(F_j(\B x))  \right)\; , \label{gammodel1} \\
 \pi(\sigma^2) =& \sigma^{-2} \\ p(\boldsymbol{\beta}|\boldsymbol{\gamma},\sigma^2)=&|J^{\boldsymbol{\gamma}}_n(\boldsymbol{\beta},\sigma^2)|^{\frac{1}{2}},\;\label{JefPrior}
\end{align}
where $|J^{\boldsymbol{\gamma}}_n(\boldsymbol{\beta},\sigma^2)|$ is the determinant of the corresponding Fisher information matrix. Hence \eqref{JefPrior} is Jeffrey's prior for the coefficients. 
In this case, marginal likelihoods can be computed exactly.
%Similarly as in the first three examples on prediction this choice of priors allows us to use Laplace's approximation to compute marginal likelihoods \citep{Hubin2017}. 

 The focus in these examples is on correctly identifying important features. Consequently we are using a  threshold value of $\eta^* = 0.25$ for the feature posteriors to define positive detections which is larger than the threshold $0.1$ used when reporting relevant features for prediction in the first three examples. To evaluate the performance of algorithms  we  report estimates for the power (Power), the false discovery rate (FDR), and the expected number of false positives (FP) based on $N$ simulation runs.  These measures are defined as follows.  
\begin{align*}
\text{Power} = &
N^{-1}\sum_{i=1}^N \text{I}(\hat\gamma^i_{j^*}=1);\\
%\text{FDR} 
%=&\frac{\sum_{i=1}^N  \text{I}(\hat\gamma^i_j=1,\gamma_j = 0)}{\sum_j\text{I}(\hat\gamma_j^i=1)};\\
%=&\sum_{i=1}^N
%    \frac{\text{TP}_i+\text{FP}_i}
%         {\sum_{i'=1}^N( \text{TP}_{i'}+\text{FP}_{i'})}
%   \frac{ \text{FP}_i}{\text{TP}_i+ \text{FP}_i};&&\text{Not to be included in the end}\\
\text{FDR} =&N^{-1}\sum_{i=1}^N\frac{\sum_j\text{I}(\gamma_j=0,\hat\gamma_j^i=1)}{\sum_j\text{I}(\hat\gamma_j^i=1)}\\  %\text{FP}_i=&\sum_{j}I(\gamma_j=0,\hat\gamma_j^i=1)\\
%\text{TP}_i=&\sum_{j}I(\gamma_j=1,\hat\gamma_j^i=1)\\
\text{FP} =& N^{-1}\sum_{i=1}^N  \sum_{j\neq j^*}\text{I}(\hat\gamma^i_j=1).
\end{align*}
Here $\hat\gamma^i_j = \text{I}(\hat p(\gamma_j|\By)>\eta^*)$ denotes the identification of $\gamma_j$ in run $i$ of the algorithm and $j^*$ is the index of a true feature, which means a feature which is in accordance with the well known physical laws.  For Kepler's third law several features can be seen as equivalent true positives and consequently the definition of Power and FDR will be slightly modified.

\refstepcounter{Example} \label{Ex:JupiterMass}
\subsubsection*{Example \arabic{Example}: Jupiter mass of the planet}

In this example we consider the planetary mass as a function of its radius and density. It is common in astronomy to use the measures of Jupiter as units and a basic physical law gives the non-linear relation 
\begin{align} \label{mass_law}
	m_p\approx R^3_p\times \rho_p \;.
\end{align}
Here $m_p$ is  the planetary mass $m_p$ measured in units of Jupiter mass (denoted \textit{PlanetaryMassJpt} from now on). Similarly the radius of the planet $R_p$  is measured in units of Jupiter radius  and the density of the planet $\rho_p$  is measured in units of Jupiter density. Hence in the data set the variable  \textit{RadiusJpt} refers to $R_p$, and \textit{PlanetaryDensJpt} denotes $\rho_p$. The approximation sign is used because the planets are not exactly spherical but rather almost spherical ellipsoids.

DBRM according to \eqref{exomodel0}-\eqref{exomodel1} is used to model \textit{PlanetaryMassJpt} as a function of the following ten potential input variables: \textit{TypeFlag}, \textit{RadiusJpt}, \textit{PeriodDays}, \textit{SemiMajorAxisAU}, \textit{Eccentricity}, \textit{HostStarMassSlrMass}, \textit{HostStarRadiusSlrRad}, \textit{HostStarMetallicity}, \textit{HostStarTempK}, \textit{PlanetaryDensJpt}. 
In order to evaluate the capability of GMJMCMC and RGMJMCMC to detect true signals we run each algorithm $N = 100$ times. 
To illustrate to which extent the performance of DBRM depends on the number of parallel runs we furthermore consider computations with 1, 4 and 16 threads, respectively. 
In each of the threads the algorithms were first run for 10\,000 iterations, with population changes at every 250 iteration, and then for a larger number of iterations based on the last population (until a total number of 10\,000 unique models was obtained). Results for GMJMCMC and RGMJMCMC using different numbers of threads are summarized in Table \ref{Tab:Results_Ex2}  both for $\mathcal{G}_1$ and $\mathcal{G}_2$.

\begin{table}[tb]
\centering
\caption{Power, False Positives (FP) and FDR for detecting the mass law \eqref{mass_law} based on the decision rule that the posterior probability of a feature  is larger than $\eta^* = 0.25$. The feature $R\times R\times R\times \rho_p$  is counted as true positive, all other selected features as false positive. DBRM is applied using the non-linear sets (NL set) $\mathcal{G}_1$ and $\mathcal{G}_2$ and different numbers of parallel threads.} 
\label{Tab:Results_Ex2}
\begin{tabular}{|cl|ccc|ccc|}%ccc|}
\hline
&&\multicolumn{3}{c|}{DBRM{\_}G{\_}PAR} &\multicolumn{3}{c|}{DBRM{\_}R{\_}PAR}\\\hline %& \multicolumn{3}{c|}{RGMJMCMC(NA)}\\ \hline
NL set&Threads&Power&FP&FDR&Power&FP&FDR\\\hline%&Power&FP&FDR\\\hline
$\mathcal{G}_1$&16&1.00 &0.00&0.00&0.97&0.06&0.03\\%0.0&2.22&1.0\\
&4&0.79&0.40&0.21 &0.61&0.73&0.39\\%&&&\\
&1&0.42&1.21&0.58 & 0.33&1.63&0.67\\\hline%&&&\\\hline
%\hline
$\mathcal{G}_2$&16&0.93&0.36&0.215&0.94&0.29&0.175\\%0.0&2.22&1.0\\
&4&0.69&0.49&0.34 &0.63&0.64&0.375\\%&&&\\
&1&0.42&1.25&0.58&0.29&1.54&0.71\\\hline%&&&\\\hline
\end{tabular}
\end{table}

Clearly the more resources become available the better DBRM performs. 
RGMJMCMC and GMJMCMC both manage to find the correct model with rather large Power (reaching gradually one) and small FDR (reaching gradually zero), when the number of parallel threads is increased. When using only a single thread it often happens that instead of the correct feature some closely related features are selected (see the Excel sheet Mass.xlsx in the supplementary material for more details). Results for the set $\mathcal{G}_1$ are slightly better than for $\mathcal{G}_2$ which illustrates the importance of having a good set of transformations when interested in model inference. Power is lower and FDR is larger for $\mathcal{G}_2$ which is mainly due to the presence of $|x|^{3.5}$ in the set of nonlinearities. The feature $R^{3.5}_p\times \rho_p$ is quite similar to the correct law \eqref{mass_law}  and moreover has lower complexity than the feature $R\times R\times R\times \rho_p$. Hence it is not surprising that it is often selected, specifically when DBRM was not run sufficiently long to fully explore features with larger complexity.

\refstepcounter{Example} \label{Ex:Kepler}
\subsubsection*{Example \arabic{Example}: Kepler's third law}
In this example we want to model the semi-major axis of the orbit  $a=$\textit{SemiMajorAxisAU} as a function of the following 10 potential input variables: \textit{TypeFlag}, \textit{RadiusJpt}, \textit{PeriodDays}, \textit{PlanetaryMassJpt}, \textit{Eccentricity}, \textit{HostStarMassSlrMass}, \textit{HostStarRadiusSlrRad}, \textit{HostStarMetallicity}, \textit{HostStarTempK}, \textit{PlanetaryDensJpt}. 

Kepler's third law says that the square of the orbital period $P$ of a planet is directly proportional to the cube of the semi-major axis $a$ of its orbit. Mathematically this can be expressed as
\begin{align}
\frac{P^2}{a^3}=\frac{4\pi^2}{G(M+m)}\approx\frac{4\pi^2}{GM},\label{3kepl0}
\end{align}
where $G$ is the gravitational constant, $m$ is the mass of the planet, $M$ is the mass of the corresponding hosting star and $M\gg m$. 
Equation \eqref{3kepl0} can be reformulated as
\begin{align*}
a \approx K\left(P^2 M_h \right)^{\frac{1}{3}},
\end{align*}
where the  approximation is due to neglecting $m$. Here the mass of the hosing star $M_h$ is measured in the unit of Solar mass and thus the constant $K$ includes not only the gravitational constant $G$ but also the normalizing constant for the mass.  There exist certain power laws which relate the mass $M_h$ of a star with its radius $R_h$  as well as with its temperature $T_h$. Although these relationships are not linear it is still not particularly surprising that there are two features which are strongly correlated with the target feature, namely $(R_h P^2)^{\frac{1}{3}}$ (with a correlation of 0.9999667) and $(T_h P^2)^{\frac{1}{3}}$ (with a correlation of 0.9995362).

In order to assess the ability of GMJMCMC and RGMJMCMC to detect these  features we performed again $N=100$ runs for both $\mathcal{G}_1$ and $\mathcal{G}_2$ when using 1, 16, and 64 threads, respectively. The number of iterations in each thread was defined exactly like in Example \ref{Ex:Kepler} to obtain 20\,000 unique models after the last swap of  populations.  The results for GMJMCMC and RGMJMCMC are  presented in Table~\ref{Tab:Results_Kepler}. A detection of any of the three highly correlated features described above is counted as a true positive, other features are counted as false positives, and the definitions of Power and FDR are modified accordingly.

\begin{table}[tb]
\centering
\caption{Results for detecting Kepler's third law \eqref{3kepl0} based on the decision rule that the posterior probability of a feature  is larger than $\eta^* = 0.25$. The three features  $\left(P\times P \times  M_h\right)^{\frac{1}{3}}$,  $\left(P\times P \times  R_h \right)^{\frac{1}{3}}$  and  $\left(P \times P \times T_h \right)^{\frac{1}{3}}$ are counted as true positives, all other selected features as false positives. Apart from the power to detect each of these features ($F_1, F_2$ and $F_3$) we report the power to detect at least one of them (Pow), the number of other detected features (FP) and the corresponding false discovery rate (FDR).  DBRM is applied using the non-linear sets (NL set) $\mathcal{G}_1$ and $\mathcal{G}_2$ and different numbers of parallel threads.} 
\label{Tab:Results_Kepler}
\begin{tabular}{|cl|cccccc|cccccc|}%ccc|}
\hline
&&\multicolumn{6}{c|}{DBRM{\_}G{\_}PAR} &\multicolumn{6}{c|}{DBRM{\_}R{\_}PAR}\\\hline %& \multicolumn{3}{c|}{RGMJMCMC(NA)}\\ \hline
NL set&Threads& $F_1$& $F_2$& $F_3$ &Pow& FP&FDR&
  $F_1$& $F_2$& $F_3$ &Pow& FP&FDR\\\hline
$\mathcal{G}_1$ &64& 81& 71&1 &1.00& 0.02&0.01& 78&75&2 &0.99 &0.03  &0.01 \\
&16& 34& 41&32&0.84&0.46&0.18 & 31&38&18&0.79 &0.68 &0.25 \\
& 1&    6& 5 &3 &0.141&0.65&0.86& 6 & 4& 2&0.12 &1.81&0.88 \\
\hline %\hline
$\mathcal{G}_2$ &64& 72& 71&3 &0.99&0.04&0.015& 70&68&9 &1.00&0.04&0.02 \\
&16& 39& 42&13&0.83&0.55&0.22 & 24&27&16&0.65&0.88&0.39 \\
&1&    7& 4 &3 &0.14&1.81&0.86& 2 & 2& 2&0.06&2.14&0.94 \\\hline
\end{tabular}
\end{table}

Qualitatively the results are similar to Example \ref{Ex:JupiterMass}.
 With increasing computational effort  Power is converging to 1 and FDR is getting close to 0 both for GMJMCMC and RGMJMCMC. On average GMJMCMC is performing slightly better than RGMJMCMC. In this example there is not such a big difference between the non-linear sets $\mathcal{G}_1$ and $\mathcal{G}_2$.
 For both examples these results were obtained with a fairly small sample size of $n = 223$ observations. 
In Appendix~\ref{ap:interp} we discuss in more detail the importance of using flexible feature spaces to obtain interpretable models. The main conclusion is that when the set of non-linear transformations is too restricted, more complex features are required to explain the same non-linear relationships. 

%\subsection{Example 3: Inference with latent Gaussian variables. A bit-coin study.}

%In this section we will address the example on the inference on bit-coin rates using various market variables and %a number of latent Gaussian variables.

\section{{Summary and discussion}}\label{section5}
In this article we have introduced a new class of deep Bayesian regression models  (DBRM) to perform automated feature engineering in a Bayesian context. The approach is easily extended to include latent Gaussian variables to model different correlation structures between individuals. Two algorithms are introduced to estimate model posterior probabilities, the genetically modified MJMCMC approach (GMJMCMC) as well as its reversible modification (RGMJMCMC). These algorithms combine two key ideas, firstly having a population (or search space)  of highly predictive features which is regularly updated and secondly using MJMCMC to efficiently explore models including features within these populations. In the reversible version transitions between populations are constructed in such a way that detailed balance equation is satisfied throughout and hence the equilibrium distribution of RGMJMCMC can be used to estimate posterior probabilities. 

In several examples we have shown that the suggested approach can be efficient not only for prediction but also for model inference. In the prediction driven examples there is hardly any difference between the performance of GMJMCMC and RGMJMCMC, whereas GMJMCMC tends to perform slightly better in terms of inference. Inference within DBRM often requires significant computational resources, hence parallel runs of GMJMCMC (RGMJMCMC), and merging results in the end, is recommended. The resulting benefits have been illustrated in several examples. 
A memory efficient way of performing parallelized DBRM is implemented in the \textit{EMJMCMC} R-package which is currently available from the GitHub repository \citep{gmjref}. The developed package gives the user high flexibility both in the choice of methods to obtain marginal likelihoods and in prior specification.

One of  the main advantage of Bayesian deep learning is the possibility to quantify the uncertainty of  predictions. Currently, commonly used  Bayesian approaches to deep learning rely on variational Bayes approximations~\citep{Gal2016Uncertainty}, which tend to be rather crude. 
%He also gives some ideas on the crude approximations for the latter based on variational  for the models with Gaussian observations as well as some heuristics for the non Gaussian models. 
In contrast our approach provides well defined and mathematically justified uncertainty measures for any parameter $\Delta$ of interest, which can be naturally derived through standard Bayesian model averaging. This also allows for calculation of credibility intervals.
\begin{comment}
\begin{equation}\label{CIS1}
p(\Delta \in \Omega_\Delta|\By) =  \sum_{\M \in\MS}{p(\Delta \in \Omega_\Delta|\M,\By)p(\M|\By)} = 1 -\alpha.
\end{equation}
Resolving the equation above with respect to $\Omega_\Delta$ for any fixed $\alpha$ will give a credible region for $\Delta$. For example in the univariate case if we are interested in a one-sided credible interval $\Delta\leq\Delta_{ub}^\alpha$ with probability $1-\alpha$ we get 
\begin{equation}\label{CIS2}
\Delta_{ub}^\alpha = F^{-1}(1-\alpha|\By).
\end{equation}
\todo{$F$ is not defined here!}
which can be easily computed by model averaging.
\end{comment}

At the same time there are still several important questions  open for discussion. It is far from obvious how to optimize the choice of weights in the  feature engineering step. In this article we have used a computationally and assumption-pragmatic strategy, based on first estimating parameters on the outer layer of the feature and then taking a nonlinear modification of the obtained feature. However, we have implemented three further strategies, including optimization of weights from the last nonlinear projection, optimization with respect to all layers of a feature and a fully Bayesian search. The first two strategies are computationally more demanding than the default strategy and rely upon additional assumptions on the nonlinear transformations involved. The third one provides a fully Bayesian approach but is extremely slow in terms of convergence. A detailed description of these strategies is given in  Appendix~\ref{ap:add.ex1_3} of the supplementary material. We have run DBRM with these strategies for the first three Examples of Section \ref{section4} and the results are reported in Appendix~\ref{ap:add.ex1_3}. However, none of these strategies clearly outperforms the  simple strategy from Section \ref{section4}. Further research in this direction is necessary and should include simulation scenarios where nested projections are part of the data generating model. 

%In many cases, covariates from temporally or spatially close observations could be important to be included, similar to the recurrent NN structures.
%This can be achieved in DBRM by extending $\mathcal{G}$ to include temporal or spatial lag functions. Such an approach both allows for selection of the important lag features but at the same time allows for penalization of higher order lags through priors.
%One might discuss some additional functions to be included in set $\mathcal{G}$. 
%A pair of functions $\sin(kx)$ and $\cos(kx)$ for example would allow to have all Fourier series within the feature space of interest. 
%In addition, in the current notation the recurrent ANN structures are not exploited automatically unless the lags of interest (of both features and responses) are included manually in the set of input features.
%If one however extends the features to depend on all of the observations jointly, i.e. defining $F_j(\B x)$ instead of $F_j(\B x_i)$ in equation \eqref{DeepModel}, then with and $\text{lag}_p(x)$ operator, making a lag of the corresponding variable $x$, one is able to explore various recurrent structures in space in time automatically. The latter in combination with the latent Gaussian variables gives additional flexibility in modeling spatial-temporal relations. Whilst not to confuse the reader this more general notation is not introduced in \eqref{DeepModel},
%the approach is perfectly possible within our implementation if one uses one or several $\text{lag}_p(x)$ functions in $\mathcal{G}$.

An important issue left for discussion is how to manage large data samples (also known as Big Data) with the DBRM approach.
As for the marginal likelihood calculated with respect to parameters across all of the layers, only very crude approximate solutions based on the variational Bayes approach \citep{jordan1999introduction} are currently scalable for such problems~\citep{barber1998ensemble,blundell2015weight}. \citet{mackay1992practical,denker1991transforming} applied the Laplace approximations to approximate marginal likelihood across all layers. This approach is also very demanding computationally and can not be easily combined with combinatorial search of the best architectures in a time friendly way. \citet{neal2012bayesian} suggests a Hamiltonian Monte Carlo (HMC) to make proper Bayesian inference on Bayesian neural networks. Unfortunately his approach is even more computationally demanding and hence does not seem scalable to high dimensional model selection. To reduce computational complexity of HMC and improve its scalability to large data sets, \citet{welling2011bayesian} suggested to use stochastic estimates of the gradient of the likelihood. 
Many recent articles describe the possibility of such sub-sampling combined with MCMC  \citep{quiroz2014speeding, quiroz2017speeding, quiroz2016exact, flegal2012applicability, pillai2014ergodicity}, where unbiased likelihood estimates are obtained from subsamples of the whole data set in such a way, that ergodicity and the desired limiting properties of the MCMC algorithm are sustained. These methods are not  part of the current implementation of DBRM, but our approach can relatively easily be adopted to allow sub-sampling MCMC techniques in the future.

%\newpage
\bigskip
\begin{center}
{\large\bf SUPPLEMENTARY MATERIAL}
\end{center}
\textbf{R package:} \textit{R} package \textit{EMJMCMC} for (R)(G)MJMCMC \citep{gmjref}. (EMJMCMC\_ 1.4.tar.gz; GNU zipped tar file)\\
\textbf{Data and code:} Data (simulated and real) and \textit{R} code for (R)(G)MJMCMC algorithms(code-and-data.zip \citep{supt}; zip file containing the data, code and a read-me file (readme.pdf))\\
\textbf{Additional materials:} Additional tables and examples. (appendix.pdf)

\footnotesize

\let\oldbibliography\thebibliography
\renewcommand{\thebibliography}[1]{\oldbibliography{#1}
\setlength{\itemsep}{0pt}} %Reducing spacing in the bibliography.

\bigskip
\begin{center}
{\large\bf ACKNOWLEDGMENTS}
\end{center}

We thank CELS project (\url{http://www.mn.uio.no/math/english/research/groups/cels/}) at the University of Oslo for giving us the opportunity, inspiration and motivation to write this article. We would also like to acknowledge NORBIS for funding academic stay of the first author in Vienna (see \url{https://norbis.w.uib.no/an-autumn-with-bayesian-approaches-in-vienna/} for more detail).

\newpage
\bibliography{ref}
\bibliographystyle{agsm}

\newpage
\appendix

\section{Alternative strategies for specifying weights}\label{ap:add.ex1_3}

In section~\ref{Subsec:FeatureEngineering} one specific choice for specifying the weights $\BB\alpha$ in feature engineering was introduced where weights are obtained by optimizing \eqref{Strategy_1}. The corresponding strategy might be abbreviated as 'optimize then transform', because the non-linear transformation happens after the weights have been computed. Here we present three alternative strategies of increasing computational complexity.

\begin{description}
\item[Strategy 2 (transform then optimize):] 
Like in the original strategy the  weights $\BB\alpha$  are specified conditional on the $F_{r_l}(\boldsymbol{x})$ terms defined at earlier steps but now optimization happens after applying the transformation $g(\cdot)$. In other words the weights are obtained as  maximum likelihood estimates using  model~\eqref{themodeleq} with $F_{r_l}, r_l  = 1,...,w_j$ as covariates and $g^{-1}(\mathsf{h}(\cdot))$ as a link function, thus fitting the model $ g^{-1}(\mathsf{h}(\mu)) = \boldsymbol{\alpha}_j^T\B F^d(\boldsymbol{x})  + \alpha_{j,0} \; .$ This strategy yields a particularly simple optimization problem if $\mathsf{h}$ is the canonical link function and $g(\cdot)$ a concave function in which case the estimates are uniquely defined. However, if we want to use gradient based optimizers then we have to make a restriction on $g(\cdot)$ to be continuous and differentiable in the regions of interest. Otherwise gradient free continuous optimization techniques have to be applied. 

\item[Strategy 3 (transform then optimize across all layers):] 
%The parameters are specified jointly for all the $F_{r_l}(\boldsymbol{x})$ terms defined at earlier steps and the currently selected $g(\cdot)$. 
Similarly as in Strategy 2 parameters are obtained as  maximum likelihood estimates using  model~\eqref{themodeleq} but we include now parameters from all layers as covariates. % and $g^{-1}(\mathsf{h}(\cdot))$ as a link function. 
We are again fitting the model $g^{-1}(\mathsf{h}(\mu)) = \boldsymbol{\alpha}_j^T\B F^d(\boldsymbol{x})  + \alpha_{j,0} \;$, but now the optimization is performed with respect to parameters across all layers. There has to be made the same restrictions on $g(\cdot)$ to be continuous and differentiable in the regions of interest as in Strategy~2 if one wants to use gradient based optimizers. One drawback of this strategy is that now there is no guarantee to find a unique global optimum of the likelihood of the feature, even if $g(\cdot)$ is concave. If gradient free optimizers are used the problem becomes computationally extremely demanding given the difficulty of the optimization problem. Furthermore different local optima  define different features having structurally the same configuration and hence the topology of the feature space is getting more complex. 

\item[Strategy 4 (fully Bayesian):] 
All parameters across all layers are drawn from a prior distributions (in the implementation we used $N(0,1)$). There are no restrictions on the nonlinear transformations, only the link function needs to be differentiable. The problem with this strategy is that to get the posterior mode a rather high-dimensional integral has to be solved. The probability of getting close to the mode is extremely low and the convergence requires typically a huge number of iterations.  This might be improved by drawing around the modes obtained by the previously suggested strategies, but to develop this idea is a topic of further research. Just like the 3rd strategy, all different values of the vector of parameters will define different features. With this strategy the joint space of configurations and parameters is (at least in principle) systematically explored, which is extremely demanding computationally. 
\end{description}

In Tables~\ref{t11}-\ref{t13} the predictive performance of these strategies is compared for the NEO asteroids classification problem (Example~\ref{Ex:NeoAsteroids}), the breast cancer data (Example~\ref{Ex:BreastCancer}), and the spam data (Example~\ref{Ex:Spam}). Comparing  Table~\ref{t11} with Table~\ref{t1}, Table~\ref{t12} with Table~\ref{t2} and Table~\ref{t13} with Table~\ref{t3}, we see that there is no substantial difference in predictive performance between the strategies  used for specifying weights.

\begin{table}[H]{
\caption{\label{t11} Comparison of performance (ACC, FPR, FNR) of alternative feature engineer strategies (indicated with $\_2,\_3,\_4$ in the table) for Example~\ref{Ex:NeoAsteroids}. For methods with random outcome the median measures (with minimum and maximum in parentheses) are displayed.
The algorithms are sorted according to median power.}
\resizebox{\textwidth}{!}{
\begin{tabular}{llll}%
\hline 
Algorithm&ACC&FNR&FPR\\\hline
DBRM{\_}G{\_}3&0.9998 (0.9959,1.0000)&0.0002 (0.0001,0.0056)&0.0002 (0.0000,0.0042)\\
DBRM{\_}R{\_}3&0.9998 (0.9953,1.0000)&0.0002 (0.0001,0.0068)&0.0002 (0.0000,0.0070)\\
DBRM{\_}R{\_}4&0.9998 (0.9945,1.0000)&0.0002 (0.0001,0.0080)&0.0002 (0.0000,0.0069)\\
DBRM{\_}G{\_}2&0.9998 (0.9933,1.0000)&0.0002 (0.0001,0.0089)&0.0002 (0.0000,0.0048)\\
DBRM{\_}G{\_}4&0.9998 (0.9932,0.9999)&0.0002 (0.0001,0.0097)&0.0002 (0.0000,0.0042)\\
DBRM{\_}R{\_}2&0.9998 (0.9925,1.0000)&0.0002 (0.0001,0.0105)&0.0002 (0.0000,0.0032)\\
\hline
\end{tabular}}
}
\end{table}

\begin{table}[H]{
\caption{\label{t12}Comparison of performance  (ACC, FPR, FNR) of alternative feature engineer strategies   for Example~\ref{Ex:BreastCancer}. See caption of Table~\ref{t11} for details.}
\resizebox{\textwidth}{!}{
\begin{tabular}{llll}%
\hline 
Algorithm&ACC&FNR&FPR\\\hline
DBRM{\_}G{\_}3&0.9695 (0.9507,0.9789)&0.0536 (0.0479,0.0862)&0.0148 (0.0000,0.0361)\\
DBRM{\_}R{\_}2&0.9695 (0.9554,0.9789)&0.0536 (0.0422,0.0756)&0.0148 (0.0000,0.0396)\\
DBRM{\_}R{\_}4&0.9695 (0.9577,0.9789)&0.0536 (0.0479,0.0756)&0.0148 (0.0000,0.0361)\\
DBRM{\_}R{\_}3&0.9671 (0.9577,0.9789)&0.0536 (0.0422,0.0756)&0.0148 (0.0037,0.0361)\\
DBRM{\_}G{\_}4&0.9671 (0.9577,0.9789)&0.0536 (0.0305,0.0756)&0.0184 (0.0000,0.0361)\\
DBRM{\_}G{\_}2&0.9671 (0.9531,0.9789)&0.0536 (0.0422,0.0862)&0.0184 (0.0000,0.0361)\\
\hline
\end{tabular}}
}
\end{table}

\begin{table}[H]{
\caption{Comparison of performance (ACC, FPR, FNR) of alternative feature engineer strategies   for Example~\ref{Ex:Spam}. See caption of Table~\ref{t11} for details.}\label{t13}
\resizebox{\textwidth}{!}{
\begin{tabular}{llll}%
\hline 
Algorithm&ACC&FNR&FPR\\\hline
DBRM{\_}G{\_}2&0.9243 (0.9100,0.9357)&0.0927 (0.0780,0.1103)&0.0545 (0.0445,0.0686)\\
DBRM{\_}G{\_}3&0.9237 (0.9100,0.9321)&0.0924 (0.0766,0.1122)&0.0548 (0.0474,0.0714)\\
DBRM{\_}G{\_}4&0.9237 (0.9113,0.9315)&0.0931 (0.0821,0.1077)&0.0562 (0.0470,0.0714)\\
DBRM{\_}R{\_}3&0.9240 (0.9132,0.9334)&0.0951 (0.0752,0.1155)&0.0552 (0.0465,0.0672)\\
DBRM{\_}R{\_}2&0.9240 (0.9132,0.9321)&0.0917 (0.0801,0.1142)&0.0550 (0.0465,0.0676)\\
DBRM{\_}R{\_}4&0.9237 (0.9109,0.9341)&0.0931 (0.0787,0.1096)&0.0562 (0.0455,0.0686)\\
\hline
\end{tabular}}
}
\end{table}

\section{Interpretability of DBRM results} \label{ap:interp}

The key feature of DBRM which allows to obtain interpretable models is that there is a whole set $\mathcal{G}$ of non-linear transformations and hence feature engineering becomes highly flexible. To illustrate the importance of the choice of $\mathcal{G}$ we have reanalyzed Example~\ref{Ex:Kepler} on Kepler's third law with DBRM{\_}G{\_}1{\_}PAR{\_}64 using only the sigmoid function as non-linear transformation and considering different restrictions on the search space:
\begin{enumerate}
\item  $\mathcal{G}=\{\text{sigmoid}(x)\}$, $D_{max} = 5$;
\item  $\mathcal{G}=\{\text{sigmoid}(x)\}$, $D_{max} = 300$, and $P_c=0$;
\item  $\mathcal{G}=\{\text{sigmoid}(x)\}$, $D_{max} = 300$, and $P_c=0$ and $p(\gamma_j)\propto 1$.
\end{enumerate}
Clearly for these settings it is not possible to obtain the correct model in closed form, but according to the universal approximation theorem \citep{hurnik1991} Kepler's 3rd law can still be well approximated.
 In the first setting the true model is infeasible since the cubic root function is not a part of $\mathcal{G}$ but at least multiplication of features via the crossover operator is still possible. In the second setting crossovers are not allowed but on the other hand there is no longer any real restriction on the  depth of features. Finally in the third setting all features get a uniform prior in the feature space. As a consequence from this lack of regularization we expect that highly complex features are generated. 
 
 Table~\ref{dnn1} illustrates the effects of making these changes in the DBRM setting on the interpretability of models by reporting the ten most frequently detected features over $N = 100$ simulations. To simplify the reporting we denote \textit{TypeFlag}, \textit{RadiusJpt}, \textit{PeriodDays}, \textit{PlanetaryMassJpt}, \textit{Eccentricity}, \textit{HostStarMassSlrMass}, \textit{HostStarRadiusSlrRad}, \textit{HostStarMetallicity}, \textit{HostStarTempK}, \textit{PlanetaryDensJpt} as $x_1$-$x_{10}$, correspondingly, and use the symbol $\sigma$ for the sigmoid function.  
\begin{table}[H]
\centering
\caption{10 most frequent features detected under Settings 1, 2 and 3} 
\label{dnn1}
\tiny
\begin{tabular}{cl|cl|cl}
\multicolumn{2}{c|}{Setting 1}&\multicolumn{2}{c|}{Setting 2}&\multicolumn{2}{c}{Setting 3}\\\hline
Fq&Feature&Fq&Feature&Fq&Feature\\\hline
99&$x_3$&100&$x_3$&100&$x_3$
\\
98&$x_3$*$x_3$&72&$\sigma$(-10.33+0.24$x_4$-8.83$x_8$)&54&$x_2$
\\
93&$x_3$*$x_{10}$&64&$x_{10}$&21&$\sigma$(-16.91-4.94$x_2$)
\\
4&$x_3$*$x_3$*$x_{10}$&62&$x_2$&19&$x_9$
\\
1&$x_9$*$x_3$&16&$\sigma$(0.21+0.01$x_3$+0.20$x_7$)&16&$x_5$
\\
1&$x_9$*$x_3$*$x_3$&9&$x_4$&14&$x_{10}$
\\
1&$x_{10}$*$x_{10}$*$x_3$&7&$\sigma$(-13.11-7.76$x_8$-3.33$x_2$+0.40$x_{10}$)&10&$\sigma$(6.88$\times10^9$-3.92$x_2$+\\&&&&&3.44$\times10^9$$\sigma$(-13.57-0.17$x_4$-\\&&&&&2.84$x_2$-7.66$x_8$+0.54$x_{10}$)\\&&&&&-13.76$\times10^9$$\sigma$($\sigma$(-13.57-\\&&&&&0.17$x_4$-2.84$x_2$-7.66$x_8$+\\&&&&&0.54$x_{10}$)))
\\
1&$x_7$*$x_3$*$x_3$&5&$\sigma$(-3.36+2.83$x_3$+0.21$x_3$-3.36$x_9$)&9&$x_4$
\\
1&$x_6$*$x_3$*$x_3$&3&$\sigma$($\sigma$(-10.33+0.24$x_4$)-8.83$x_8$)&8&$\sigma$(-13.57-0.17$x_4$-\\&&&&&2.84$x_2$-7.66$x_8$+0.54$x_{10}$)
\\
1&$x_3$*$x_3$*$x_3$&3&$\sigma$(0.15+0.05$x_4$-0.01$x_3$+0.15$x_7$)&7&$\sigma$(0.21+0.21$x_3$)
\\
0&Others&4&Others&$>300$&Others\\
\hline
\end{tabular}
\end{table}

The results shown in Table~\ref{dnn1} are not too surprising. Restricting the set of non-linear transformations results in increasingly more complex features. Using Setting 1 there is not a single occurrence of a sigmoid function while in Setting 2 the feature  $\sigma$(-10.33+0.24$x_4$-8.83$x_8$) is selected in almost 3 out of 4 runs. Removing the complexity penalty in Setting 3 yields highly complex features which are however no longer that much replicable over simulation runs. 

The general conclusion is that more flexible sets of non-linear transformations $\mathcal{G}$ provide the possibility to obtain interpretable models which have similar predictive performance than complex models based on a less flexible set of transformations. Problems with the latter approach include potential overfitting, substantially more need of memory and computational effort (if one for instance is interested in predictions). In contrast DBRM will often construct architectures that reach state of the art performance in terms of prediction and still remain relatively simple, hence representing sophisticated phenomena in a fairly parsimonious way. 

\section{Further applications}
\subsection{Example 6: Simulated data with complex combinatorial structures}\label{ap:sim.ex}

In this simulation study we generated $N = 100$ datasets  with $n=1000$ observations and $p=50$ binary covariates. The covariates were assumed to be independent and were simulated for each simulation run  as $X_{j}\sim \text{Bernoulli}(0.5)$ for $j \in = 1,\dots,50$. In the first simulation study the responses were simulated according to a Gaussian distribution with error variance $\sigma^2 = 1$ and individual expectations specified as follows: 
\begin{eqnarray*}
E(Y) & = & 1 + 1.5 X_{7} + 1.5 X_{8}+ 6.6 X_{18}*X_{21} + 3.5 X_{2}*X_{9}
+ 9 X_{12}* X_{20}* X_{37} 
\\ &+& 7 X_{1}* X_{3}* X_{27} 
+ 7 X_{4}* X_{10}* X_{17}* X_{30} 
+ 7 X_{11}* X_{13}* X_{19}* X_{50}
\end{eqnarray*}
We compare the results of GMJMCMC, RGMJMCMC for DBRM with the Bayesian logic regression model in \citet{Hubin2017}. The latter model differs from the current one in that the model prior is different. For a given logical tree (which is the only allowed feature form) we use $a^{c(L_j)} = \frac{1}{N(s_j)}  \; , \quad  s_j\leq C_{max}$, where $N(s_j)=\binom{m}{s_j}\ 2^{2s_j-2}$. $Q$ and priors for the model parameters are the same as defined in DBRM model. All algorithms were run on 32 threads until the same number of models were visited after the last change of the model space. In particular, in each of the threads the algorithms were run until 20000 unique models were obtained after the last population of models had been generated at iteration 15000. Specification of the Bayesian Logic Regression model corresponds exactly to the one used in simulation Scenario 6 in \citet{Hubin2017}. In this example a detected feature is only counted as a true positive if it exactly coincides with a feature of the data generating model. The results are summarized in Table~\ref{simres}.  Detection in this example corresponds to the features having marginal inclusion probabilities above $\eta^*  = 0.5$ after the search is completed.

Both GMJMCMC and RGMJMCMC performed exceptionally well for fitting this DBRM with slight advantages of the former. %On the other hand the non-adaptive version of RGMJMCMC(NA) only had large power to detect single covariates and two-way interactions but completely failed to detect higher order interactions. Note that as discussed in \citep{Hubin2017} in this kind of analysis the number of false positives often increases with lower power because instead of detecting a correct feature the algorithm might instead detect one or more features closely related to the correct feature.   \todo{Just give one sentence which explains why the NA version does not perform so well} %Similar properties of simple MCMC algorithms (FBLR and MCLR) in such ultra high dimensional scenarios were also shown in \citet{Hubin2017}. 
The original GMJMCMC(LR) algorithm for fitting Bayesian Logic Regression in this case performed almost as well as GMJMCMC and RGMJMCMC, except for a significant drop in power in one of the four-way interactions.  This is however not too surprising because the crossover operator of DBRM models perfectly fits the data generating model whereas the logic regression model focuses on general logic expressions and provides in that sense a larger chance to generate features which are closely related to the data generating four-way interaction \citep{Hubin2017}.
%As discussed in \citet{Hubin2017} standard MCMC algorithms in such ultra high dimensional cases often get stuck in local extrema and mix extremely slowly. Hence the adaptive algorithms seem to be preferable.

\begin{table}[H]
\centering
\caption{Results for Example 6. Power for individual trees, overall power (average power over trees),  expected number of false positives (FP), and false discovery rate (FDR) are compared between GMJMCMC, RGMJMCMC and Bayesian Logic regression.} 
\label{simres}
\begin{tabular}{lccc}%c}
\hline
&DBRM{\_}G&DBRM{\_}R&Bayesian Logic regression \\\hline%&RGMJ(NA)&GMJ(LR) \\\hline
$X_7$&1.0000&1.0000&0.9900\\% & 1.0000&0.9900\\
$X_8$&1.0000&1.0000&1.0000\\%&1.0000&1.0000\\
$X_2*X_9$&1.0000&0.9600&1.0000\\%&0.6800&1.0000\\
$X_{18}*X_{21}$&1.0000&1.0000&0.9600\\%&0.9700&0.9600\\
$X_{1}* X_{3}* X_{27}$&1.0000&1.0000&1.0000\\%&0.0000&1.0000\\
$X_{12}* X_{20}* X_{37}$&1.0000&1.0000&0.9900\\%&0.0000&0.9900\\
$X_{4}* X_{10}* X_{17}* X_{30}$&0.9900&0.9200&0.9100\\%&0.0000&0.9100\\
$X_{11}* X_{13}* X_{19}* X_{50}$&0.9800&0.8900&0.3800\\%&0.0000&0.3800\\
Overall Power&0.9963&0.9712&0.9038\\%&0.4213&0.9038\\
FP&0.5100&1.1400&1.0900\\ %& 14.640&1.0900\\
FDR&0.0601&0.1279&0.1310\\%&0.8453&0.1310\\
\hline\\
\end{tabular}
\end{table}

\subsection{Example 7: Epigenetic data with latent Gaussian variables}\label{ap:ex.epi}

This example illustrates how the extended DBRM model~\eqref{DeepModel2} can be used for feature engineering while simultaneously modeling correlation structures  with latent Gaussian variables. To this end we consider genomic and epigenomic data from \textit{Arabidopsis thaliana}.
Arabidopsis is an extremely well studied model organism for which plenty of genomic and epigenomic data sets are publicly available~\citep[see for example][]{becker2011spontaneous}. 
In terms of epigenetic data we consider methylation markers. DNA locations with a nucleotide of type cytosine nucleobase (C) can be either methylated or not. Our focus will be  on  modeling the amount of methylated reads through different covariates including (local) genomic structures, gene classes and expression levels. The studied data was obtained from the NCBI GEO archive \citep{barrett2013ncbi}, where we consider a sample of $n = 500$ base-pairs chosen from a random genetic region of a single plant. Only cytosine nucleobases can be methylated, hence these 500 observations correspond to 500 sequential cytosine nucleobases from the selected genetic region. 

At each location $i$ there are $R_i$ reads of which $Y_i$ are methylated. Although a binomial distribution would be most appropriate here, we have, due to numerical considerations, assumed a Poisson distribution for $Y_i$ with mean $\mu_i \in \mathbb{R}^{+}$.
%The number of methylated reads $Y_i, \in = 1,...,R_i$ per locus $i = 1,...,n$  is modeled by a Poisson distribution with mean $\mu_i \in \mathbb{R}^{+}$. 
In the extended DBRM model \eqref{DeepModel2} we use the logarithm as the canonic link function. For the feature engineering part of the model we consider $p=14$ input variables which are defined as follows. 
The first factor with three levels is coded with two dummy variables $X_1$ and $X_2$ and describes whether a location belongs to a CGH, CHH or CHG genetic region, where H is either A, C or T. The second factor is concerned with the distance of the location to the previous cytosine nucleobase (C). The dummy variables  $X_3-X_8$ code whether the distance  is  2, 3, 4, 5, from 6 to 20 or greater than 20, respectively,  taking a distance of 1 as reference. 
The third factor describes whether a location belongs to a gene, and if yes whether this gene belongs to a particular group of biological interest. These groups are denoted by $M_\alpha$, $M_\gamma$, $M_\delta$ and $M_0$ and are coded by 3 additional dummy variables $X_{9}-X_{11}$.  Two further covariates are derived from the expression level  for a nucleobase  being either greater than 3000 FPKM or greater than 10000 FPKM, defining binary covariates $X_{12}$ and $X_{13}$. 
The last covariate $X_{14}$ is the offset defined by the total number of reads per location $R_t, \in \mathbb{N}$. The offset mentioned above is modeled as an additional component of the model and hence can be a matter of model choice.

Furthermore we consider the following latent Gaussian variables to model spatial correlations, where marginal likelihoods are computed using the INLA package~\citep{INLAWEB} and the parametrization is taken from there as well:

\begin{description}

\item[$\boldsymbol{AR(1)}$ process:] Autoregressive process of order 1 with parameter $\rho \in \mathbb{R}$, namely $ \delta_i =  \rho\delta_{i-1} + \epsilon_i\in \mathbb{R}$ with $\epsilon_{i} \sim N(0,\sigma_{\epsilon}^2)$, $i = 1,...,n$. For this process the priors on the hyper-parameters are defined as follows: First reparametrize to $\psi_1 = \log{\frac{1}{\sigma_{\epsilon,t}^2}(1-\rho^2)}$, $\psi_2 =   \log{\frac{1+\rho}{1-\rho}}$ \; , \\ then assume
%whilst the following priors have been chosen for the latent Gaussian field:
$\psi_1 \sim \text{logGamma}(1,5\times 10^{-5})$, $\psi_2 \sim N(0,0.15^{-1})$. 

\item[$\boldsymbol{RW(1)}$ process:] Random walk of order 1 based on the Gaussian vector $\delta_1,...,\delta_n$, which is constructed assuming independent increments: $\Delta \delta_i = \delta_i - \delta_{i-1}\sim N(0,\tau^{-1})$. Priors on the hyper-parameters are defined as follows: Reparametrize to $\psi = \log\tau$ and assume $\psi \sim \text{logGamma}(1,5\times10^{-5})$. %\citep{INLAWEB}.

\item[$\boldsymbol{OU}$ process:] Ornstein-Uhlenbeck process (with mean zero), which is defined via the stochastic differential equation $d\delta(t) = -\phi\delta(t)dt + \sigma dW(t)$, where $\phi>0$ and $\{W(t)\}$ is the Wiener process.  This is the continuous time analogue to the discrete time $AR(1)$ model and the process is Markovian. Let $\delta_1,...,\delta_n$ be the values of the process at increasing locations
 $t_1,...,t_n$, then the conditional distribution $\delta_i|\delta_1,...,\delta_{i-1}$ is Gaussian with mean $\delta_{i-1}e^{-\phi z_i}$ and precision $\tau(1-e^{-2\phi z_i})^{-1}$ , where $z_i = t_i - t_{i-1}$ and $\tau = 2\phi/\sigma^2$. Priors on the hyper-parameters are defined as follows: We first reparametrize to $\psi_1 = \log{\tau}$, $\psi_2 = \log{\phi}$ and then assume $\psi_1 \sim \text{logGamma}(1,5\times 10^{-5})$, $\psi_2 \sim N(0,0.2^{-1})$. %\citep{INLAWEB}.

\item[$\boldsymbol{IG}$ process:] Independent Gaussian process $\{\delta_i\}$ with $\delta_i\sim N(0,\tau^{-1})$. Priors on the hyper-parameters are defined as follows: First reparametrize to $\psi = \log\tau$ and then assume $\psi \sim \text{logGamma}(1,5\times10^{-5})$. %\citep{INLAWEB}.
\end{description} 
These different processes allow to model different spatial dependence structures of methylation rates along the genome. They can also account for variance  which is not explained by the covariates. DBRM can be used to find the best combination of latent variables for modeling this dependence in combination with deep feature engineering. 

The Bayesian model is completed with Gaussian priors for the regression coefficients
\begin{align}
\boldsymbol{\beta}|\boldsymbol{\gamma} \sim & N_{p_{\boldsymbol{\gamma}}}(\boldsymbol{0},I_{p_\gamma}e^{-\psi_{\beta_\gamma}})\\
\psi_{\beta_\gamma}\sim & \text{logGamma}(1,5\times10^{-5})
\end{align}
%~\eqref{glmbetarprior} 

We then use prior~\eqref{eq:modelprior} with $a=e^{-2 \log n}$ for $\boldsymbol{\gamma}$. We use a similar prior for $\B \lambda$ associated with selection of the latent Gaussian variables, 
\begin{align}
  p(\boldsymbol\lambda)\propto &  \prod_{j=1}^{r}\exp(-2\log n\lambda_j)\; ,\label{lammodel1}
\end{align}
 where each latent Gaussian variables has equal prior probability to be included.

\begin{center}

\begin{table}[tb]
\centering
\caption{\label{tepi} Results for Example 7: Features and latent Gaussian variables (LGV) with posterior probability above 0.25 found by GMJMCMC using 16 parallel threads.}
\begin{tabular}{l|lc}%
\hline 
&Variable&Posterior\\
\hline 
Features&offset(log(total.bases))&1\\
&CG&0.999\\
&CHG&0.952\\
\hline
LGV&RW(1)&1\\
\hline
\end{tabular}
\end{table}

\end{center}

From the results of Table~\ref{tepi} we learn that there are three features with large posterior probability: the offset for the total number of observations per location as well as two features indicating whether the location is CG or CHG. Among the latent Gaussian variables only the random walk process of order one was found to be of importance. None of the engineered features were found of importance for this example. Like in Example 1 and 2 we observe that although our feature space includes highly non-linear features the regularization due to our priors guarantees the choice of parsimonious models and non-linear features are only selected if really necessary.

\end{document}